\lstdefinelanguage{ST}{%
  keywords={IF,THEN,ELSE,END_IF,VAR_INPUT,VAR_OUTPUT,END_VAR,REAL,
  						BOOL,PROGRAM,END_PROGRAM,CONFIGURATION,RESOURCE,TASK,
            INTERVAL,END_RESOURCE,END_CONFIGURATION,WITH,PRIORITY,
            ON,PLC,OR,AND},%
  sensitive=true,
  morecomment=[s]{/*}{*/},
  deletestring=[d]',
  showstringspaces=false,
  commentstyle=\color{vgreen},
  mathescape,
  escapeinside={/*@}{@*/}}[keywords]
\begin{document}

\newtheorem{remark}{Remark}
\newtheorem{case}{Case}
\newtheorem{ih}{IH}

\newcommand{\name}{\textsc{HyPLC}\xspace}
\newcommand{\nametitle}{HyPLC\xspace}
\newcommand{\etal}{\xspace \emph{et al.}\xspace}

\newcommand{\compilest}[1]{\textsf{\small HP}$(#1)$}
\newcommand{\compilehp}[1]{\textsf{\small ST}$(#1)$}
\newcommand{\compilesteq}[1]{\textsf{\small HP}(#1)}
\newcommand{\compilehpeq}[1]{\textsf{\small ST}(#1)}
\newcommand{\compileop}{\ensuremath{\triangleright}}

\newcommand{\stprogram}[1]{$s_{#1}$}
\newcommand{\stprogrameq}[1]{s_{#1}}

\newcommand{\skipp}{\textbf{skip;}\xspace}
\newcommand{\debug}[1]{}
\newcommand{\says}[3]{\debug{\todo[size=\small,color=#2,inline]{#1 says: #3}}}

\newcommand{\luis}[1]{\says{Luis}{pink}{#1}}
\newcommand{\andre}[1]{\says{Andr\'{e}}{green}{#1}}
\newcommand{\stefan}[1]{\says{Stefan}{yellow}{#1}}

\newcommand{\orange}[1]{\textcolor{orange}{#1}}
\newcommand{\green}[1]{\textcolor{orange}{#1}}
\newcommand{\blue}[1]{\textcolor{blue}{#1}}
\newcommand{\red}[1]{\textcolor{red}{#1}}

\def\dl{\dL}
\def\kyx{\KeYmaeraX}

\newcommand{\arithmeticOp}{\ensuremath{\sim}}
\newcommand{\relationalOp}{\ensuremath{\bowtie}}
\newcommand{\logicalOp}{\ensuremath{\frown}}
\newcommand{\relationalOpST}{\ensuremath{\relationalOp_{\texttt{ST}}}}
\newcommand{\logicalOpST}{\ensuremath{\logicalOp_{\texttt{ST}}}}
\newcommand{\relationalOpHP}{\ensuremath{\relationalOp_{\texttt{HP}}}}
\newcommand{\logicalOpHP}{\ensuremath{\logicalOp_{\texttt{HP}}}}

\providecommand{\ivr}{Q}%
\newcommand{\ws}{\nu}%
\newcommand{\wt}{\omega}%
\newcommand{\wsig}{\mu}%
\newcommand{\wst}{\sigma}%
\newcommand{\stdI}{\interpretation[state=\ws]}%
\newcommand{\I}{\iconcat[state=\ws]{\stdI}}%
\newcommand{\It}{\iconcat[state=\wt]{\stdI}}%
\newcommand{\Isig}{\iconcat[state=\wsig]{\stdI}}%
\newcommand{\Ist}{\iconcat[state=\wst]{\stdI}}%

\newcommand{\stskip}{\textsf{skip}}
\newcommand{\stopeval}{\ensuremath{\rightarrow_a}}
\newcommand{\stprgeval}{\ensuremath{\rightarrow}}
\newcommand{\stprg}[1]{\ensuremath{#1}}
\newcommand{\stassign}[2]{\ensuremath{#1 := #2}}
\newcommand{\stif}[3]{\ensuremath{\textsf{if}~(#1)~\textsf{then}~#2~\textsf{else}~#3}}
\newcommand{\stifthen}[2]{\ensuremath{\textsf{if}~(#1)~\textsf{then}~#2}}
\newcommand{\stsubst}[3]{\ensuremath{#1 [#2 \mapsto #3]}}

\newcommand{\ctrl}{u {\,:\in\,} \text{ctrl}(x,i)}
\newcommand{\extinput}{\prandom{i}}
\newcommand{\scduration}{\ensuremath{\varepsilon}}
\newcommand{\plant}{\ensuremath{\humod{t}{0};\{\D{x}=f(x,u),\D{t}=1~\&~t{\leq}\scduration\}}}
\newcommand{\plantabbrv}{\text{plant}}
\newcommand{\hpscancycle}{\ensuremath{\extinput;{\ctrl};\plant}}
\newcommand{\hpscancycleabbrv}{\ensuremath{\extinput;{\ctrl};\plantabbrv}}
\newcommand{\stscancycle}{\ensuremath{\extinput;\compilehp{\ctrl};\plant}}

\newcommand{\boxalign}[2][0.4\textwidth]{
  \par\noindent\tikzstyle{mybox} = [draw=black,inner sep=8pt]
  \begin{center}\begin{tikzpicture}
   \node [mybox] (box){%
    \begin{minipage}{#1}{\vspace{-5mm}#2}\end{minipage}
   };
  \end{tikzpicture}\end{center}
}

\title[\nametitle: Hybrid PLC Translation for Verification]{\nametitle: Hybrid Programmable Logic Controller \\Program Translation for Verification
}
\author{Luis Garcia}
\affiliation{%
  \department{\textit{Electrical and Computer Engineering Department}}
  \institution{University of California, Los Angeles}
  \city{Los Angeles}
  \state{CA}
  \country{USA}
  }
\email{garcialuis@ucla.edu}

\author{Stefan Mitsch}
\affiliation{%
  \department{\textit{Computer Science Department}}
  \institution{Carnegie Mellon University}
  \city{Pittsburgh}
  \state{PA}
  \country{USA}
  }
\email{smitsch@cs.cmu.edu}

\author{Andr\'e Platzer}
\affiliation{%
  \department{\textit{Computer Science Department}}
  \institution{Carnegie Mellon University}
  \city{Pittsburgh}
  \state{PA}
  \country{USA}
  }
\email{aplatzer@cs.cmu.edu}


\begin{abstract}
Programmable Logic Controllers (\emph{PLCs}) provide a prominent choice of implementation platform for safety-critical industrial control systems.
Formal verification provides ways of establishing correctness guarantees, which can be quite important for such safety-critical applications.
But since PLC code does not include an analytic model of the system plant, their verification is limited to discrete properties.
In this paper, we, thus, start the other way around with hybrid programs that include continuous plant models in addition to discrete control algorithms.
Even deep correctness properties of hybrid programs can be formally verified in the theorem prover \KeYmaeraX that implements differential dynamic logic, \dL, for hybrid programs.
After verifying the hybrid program, we now present an approach for translating hybrid programs into PLC code.
The new tool, \name, implements this translation of discrete control code of verified hybrid program models to PLC controller code and, vice versa, the translation of existing PLC code into the discrete control actions for a hybrid program given an additional input of the continuous dynamics of the system to be verified. 
This approach allows for the generation of real controller code while preserving, by compilation, the correctness of a valid and verified hybrid program. PLCs are common cyber-physical interfaces for safety-critical industrial control applications, and \name serves as a pragmatic tool for bridging formal verification of complex cyber-physical systems at the algorithmic level of hybrid programs with the execution layer of concrete PLC implementations.
\end{abstract}

\begin{CCSXML}
<ccs2012>
<concept>
<concept_id>10003752.10003766</concept_id>
<concept_desc>Theory of computation~Formal languages and automata theory</concept_desc>
<concept_significance>500</concept_significance>
</concept>
<concept>
<concept_id>10010147.10010341.10010342.10010344</concept_id>
<concept_desc>Computing methodologies~Model verification and validation</concept_desc>
<concept_significance>500</concept_significance>
</concept>
<concept>
<concept_id>10010520.10010553</concept_id>
<concept_desc>Computer systems organization~Embedded and cyber-physical systems</concept_desc>
<concept_significance>500</concept_significance>
</concept>
</ccs2012>
\end{CCSXML}

\ccsdesc[500]{Theory of computation~Formal languages and automata theory}
\ccsdesc[500]{Computing methodologies~Model verification and validation}
\ccsdesc[500]{Computer systems organization~Embedded and cyber-physical systems}

\keywords{Industrial control, programming languages, formal verification, semantics}

\maketitle

\section{Introduction}
%
%
There has been an increased emphasis on the verification and validation of software used in embedded systems in the context of \emph{industrial control systems} (ICS). 
ICS represent a class of cyber-physical systems (CPS) that provide monitoring and networked process control for safety-critical industrial environments, e.g., the electric power grid~\cite{mcgranaghan1993voltage}, railway safety ~\cite{abbpluto2016}, nuclear reactors ~\cite{kesler2011vulnerability}, and water treatment plants ~\cite{manesis1998intelligent}. 
A prominent choice of implementation platform for many ICS applications are \emph{programmable logic controllers} (PLCs) that act as interfaces between the \textit{cyber world}--i.e., the monitoring entities and process control--and the \textit{physical world}--i.e., the underlying physical system that the ICS is sensing and actuating. 
Efforts to verify the correctness of PLC applications focus on the code that is loaded onto these controllers~\cite{moon1994modeling},~\cite{darvas2015formal},~\cite{mader1999timed},~\cite{thapa2005transformation}. 
Existing methods are based on model checking of safety properties specified in modal temporal logics, e.g., Linear Temporal Logic (LTL)~\cite{gerth1995simple} and Computation Tree Logic (CTL)~\cite{clarke1986automatic}. 
However, since PLC code does not include a model of the system plant, such analyses are limited to more superficial, discrete properties of the code instead of analyzing safety properties of the resulting physical behavior. 

%
%
In this paper, we thus start from hybrid systems models of ICS, in which the discrete computations of controllers run together with the continuous evolution of the underlying physical system. 
That way, correctness properties that consider both control decisions and physical evolution can be verified in the theorem prover \KeYmaeraX~\cite{fulton2015keymaera}.
The verified hybrid programs can then be compiled to PLC code and executed as controllers.
The reverse compilation from PLC code to hybrid programs facilitates verifying existing PLC code with respect to pre-defined models of the continuous plant dynamics.

%
%
\begin{figure}[tp]
\centering
\includegraphics[width=0.48\textwidth]{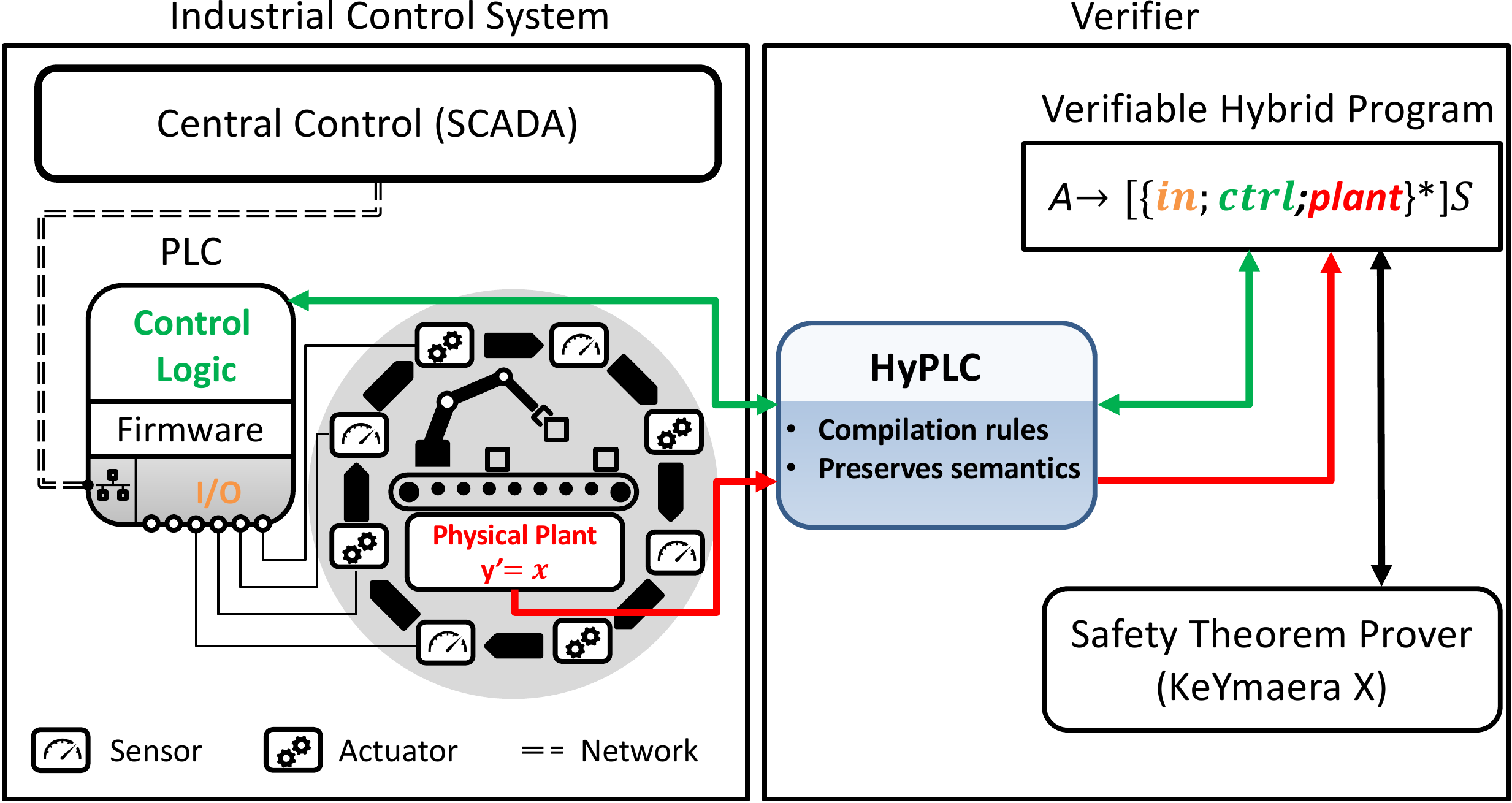}
\caption{\name provides a bidirectional translation of the discrete control of a verifiable hybrid program expressed in ~\dl{} and the control logic code that runs on a PLC in the context of a cyber-physical industrial control system
}
\label{fig:hyplc-overview}
\end{figure}
In this paper, we present \name, a tool that compiles verified hybrid systems models into PLC code and vice versa. Figure \ref{fig:hyplc-overview} depicts a high-level overview of the bidirectional compilation provided by \name.
The hybrid models are specified in differential dynamic logic, \dl\cite{DBLP:journals/jar/Platzer08,DBLP:journals/jar/Platzer17}, which is a dynamic logic for hybrid systems expressed as \textit{hybrid programs}. 
Compiling hybrid programs to PLC code generates deterministic implementations of the controller abstractions typically found in hybrid programs, which focus on capturing the safety-relevant decisions for verification purposes concisely with nondeterministic modeling concepts. 
Nondeterminism in hybrid programs can be beneficial for verification since nondeterministic models address a family of (control) programs with a single proof at once, but is detrimental to implementation with Structured Text (ST) programs on PLCs.
Therefore, in this paper we focus on hybrid programs in scan cycle form. The compilation adopts the IEC 61131-3 standards for PLCs~\cite{john2010iec}. 
Compiling PLC code to \dl and hybrid programs, implemented on top of the open-source MATIEC IEC 61131-3 compiler~\cite{sousamatiec}, provides a means of analyzing PLC code on pre-defined models of continuous evolution with the deductive verification techniques of \KeYmaeraX.
The core contributions of this paper lie in our correctness proofs for the bidirectional compilation, so that both directions of compilation yield a way of obtaining code with safety guarantees.
Finally, we evaluated our tool on a water treatment testbed~\cite{mathur2016swat} that consists of a distributed network of PLCs.



The rest of the paper is organized as follows. Section~\ref{sec:background} provides background information. Section~\ref{sec:translation-of-terms} introduces compilation rules for terms in both languages and describes how the semantics is preserved. Section~\ref{sec:translation-of-formulas} and Section~\ref{sec:translation-of-programs} describe the compilation of formulas and programs, respectively, and include formal proofs of correctness and preservation of safety across compilation. Section~\ref{sec:evaluation} presents our evaluation of \name on a water treatment case study. We discuss the limitations of \name and conclude in Section~\ref{sec:conclusion}.

\section{Preliminaries}\label{sec:background}
This section explains the preliminaries necessary to understand the underlying concepts of \name. We first provide a brief overview of PLCs, including how they are integrated into ICS as well as the associated programming languages and software model as defined by the IEC 61131-3 standard for PLCs~\cite{john2010iec}. We then discuss previous works in formal verification of PLC programs, followed by an overview of the dynamic logic and hybrid program notation used by \name.

\subsection{Programmable Logic Controllers}
Part 3 of the IEC 61131 standards~\cite{john2010iec} for PLCs specifies both the software architecture as well as the programming languages for the control programs that run on PLCs. We will provide the requisite knowledge for understanding the assumptions made by \name.

\begin{figure}[t]
\centering
\includegraphics[width=0.48\textwidth]{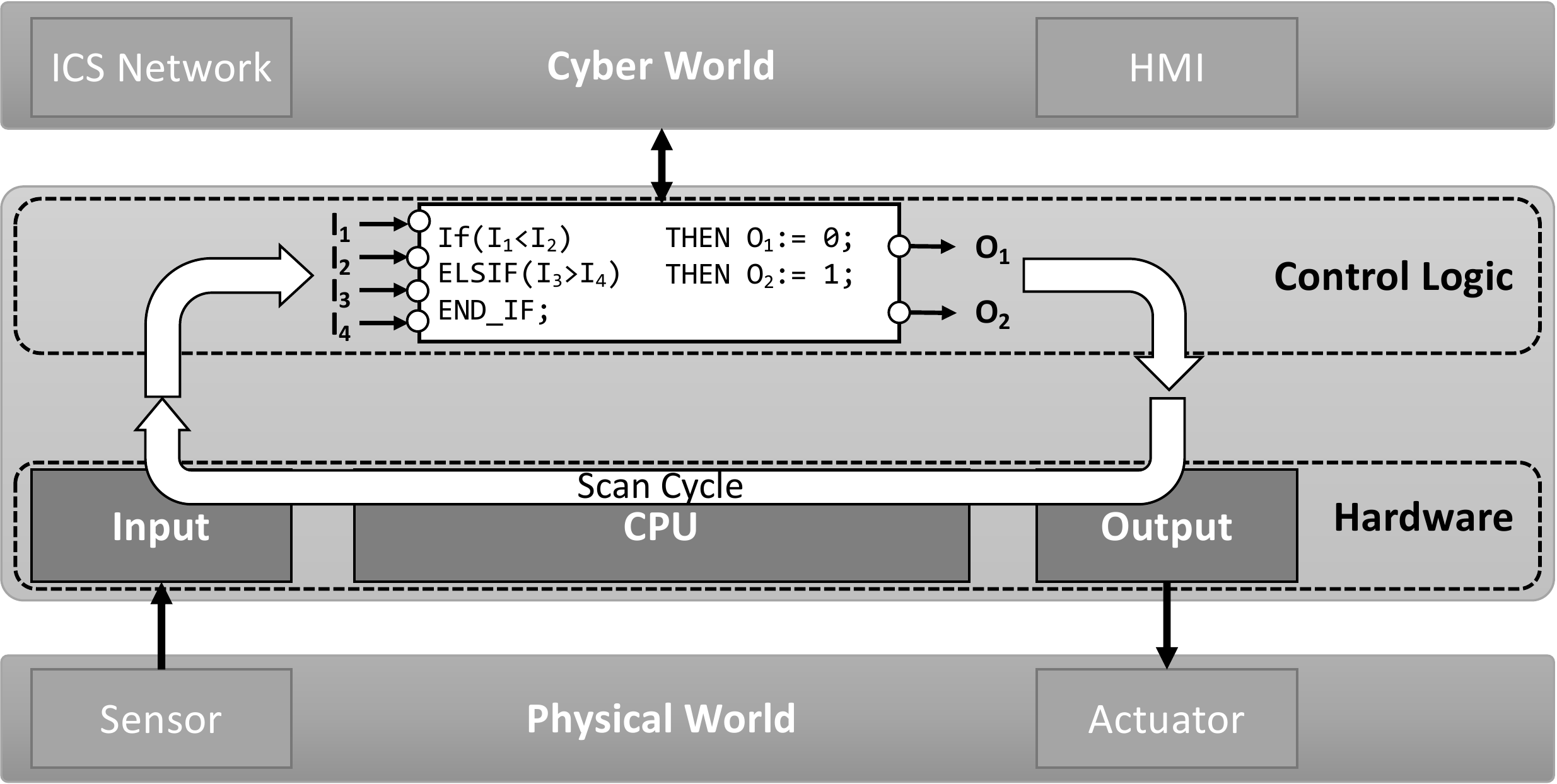}
\caption{The PLC \textit{scan cycle} in the context of ICS}
  
 \label{fig:plc-ics}
\end{figure} 

%
%

\noindent\textbf{PLCs in the context of ICS.} Figure~\ref{fig:plc-ics} shows how PLCs are integrated into ICS as well as a schematic overview of the PLC \textit{scan cycle}. Scan cycles are typical control-loop mechanisms for embedded systems. 
The PLC ``scans'' the input values coming from the physical world and processes this system state through the \textit{control logic} of the PLC, which is essentially a reprogrammable digital logic circuit. The outputs of the control logic are then forwarded through the output modules of the PLC to the physical world. 
\name focuses on hybrid programs of a shape that fits to this scan cycle control principle using time-triggered models.
\\
%
%

\noindent\textbf{Programming languages and software execution model.} \name focuses on bidirectional compilation of the \textit{Structured Text} (ST) language, which is a textual language similar to Pascal that, for formal verification purposes~\cite{darvas2017plc}, can be augmented to subsume all other languages\footnote{ladder diagrams (LD)
, function block diagrams (FBD)
, sequential function charts (SFC)
, and instruction list (IL)} defined by the IEC 61131-3 standard.  For the software execution model, we refer to~\cite{john2010iec}. We initially consider a single-resource configuration of a PLC that has a single task associated with a particular program that executes for a particular interval, ~\scduration. Because, it is a single task configuration, we initially do not consider priority scheduling.

\subsection{PLC Programming Language Verification}

Due to their wide use, there have been numerous works regarding the verification of safety properties of PLC programming languages. Rausch\etal~\cite{rausch1998formal} modeled 
PLC programs consisting only of Boolean variables, single static assignment of variables, no special functions or function blocks, and no jumps except subroutines. Such an approach was an initial attempt to provide formal verification of discrete properties of the system, i.e., properties that can be derived and verified purely from the software, ignoring the physical behavior of its plant. Similarly, other approaches have been presented whose safety properties are specified and modeled using linear temporal logic~\cite{mclaughlin2014trusted,pavlovic2007automated} or by representing the system as a finite automaton~\cite{mertke2001formal,darvas2015plcverif,tapken1998moby}.  The formal verification of such systems is limited by state-space exploration techniques, e.g., there will be an uncontrollable number of states for continuous systems because time is a variable. As such, these techniques will only be able to explore a subset of the states.
\luis{I removed instances of "shallow". However, should we also remove instances "deep"?}
\andre{Neither shallow nor deep capture what we mean, but we should make it clear that there's a crucial difference between ignoring vs including the plant. Maybe better to discuss this summarizing differences}


Conversely, there have been several works regarding the generation of PLC code based on the formal models of PLC code. PLCSpecif~\cite{darvas2015formal} is a framework for generating PLC code based on finite automata representations of the PLC. Although this framework provides a means of generating PLC code based on formally verified models, the formal verification has the aforementioned limitations of providing correctness guarantees for discrete properties of the PLC code that can be verified for a finite time horizon. The approach presented by Sacha~\cite{sacha2005automatic} has similar limitations since it uses state machines to represent finite-state models of PLC code. Darvas~\etal also used PLCSpecif for conformance checking of PLC code against temporal properties~\cite{darvas2016conformance}. Flordal\etal automatically generated PLC-code for robotic arms based on generated \textit{zone} models to ensure the arms do not collide with each other as well as to prevent deadlock situations ~\cite{flordal2007automatic}. The approach generates a finite-state model of the robot CPS environment that is then used to generate supervisory code within the PLC that controls its arm. The approach abstracts the PLC's discrete properties and does not incorporate the PLC's timing properties into the physical plant model. Furthermore, this is a domain-specific approach for robot simulation environments and does not provide generalizability nor a means of formal verification of the initially generated finite-state models. 

VeriPhy \cite{bohrer2018veriphy} compiles CPS models specified in \dl{} to verified executables that sandbox controllers with safe fallback control and monitor for expected plant behavior. 
The VeriPhy pipeline combines multiple tools to bridge implementation and arithmetic gaps and provide proofs that safety is preserved when compiling to a controller executable--while \name provides bi-directional compilation in the context of PLC scan cycles. Majumdar\etal also explored equivalence checking of C code and an associated SIMULINK model~\cite{majumdar2013compositional}. Although such an approach is useful for modelling the behavior of C code in a control system model, additional efforts are needed to interface such a model with verification tools such as \kyx as well as to model the behavior of PLCs. 
%
%
\subsection{Differential Dynamic Logic and Hybrid Programs}
\name works on models that have been specified in differential dynamic logic (\dl)~\cite{platzer2010logical,DBLP:journals/jar/Platzer17}, a logic that models hybrid systems and can be formally verified with a sound proof calculus. 
The formalized models that use \dl are referred to as \textit{hybrid programs}. As with ST, we will recall the syntax and semantics of \dl and hybrid programs as needed throughout the course of this paper.

The modal operators $\dbox{\alpha}$ and $\didia{\alpha}$ are used to formally describe the 
behavioral properties the system has to satisfy. 
If $\alpha$ denotes a hybrid program, and $\phi$ and $\psi$ are formulas, then the \dl
formula \[\phi \limply \dbox{\alpha}\psi\] means ``if $\phi$ is initially satisfied, then
$\psi$ holds true for all the states after executing the hybrid program $\alpha$''. 
This way, safety properties can be encoded for a model $\alpha$. 
We use the modeling pattern
\[
 A \limply \dbox{\prepeat{\{\text{ctrl};\text{plant}\}}}S,
\]
where $A$ represents assumptions on the initial state of the system, $\text{ctrl}$ describes the discrete control transitions of the system, $\text{plant}$ defines the continuous physical behavior of the system, and $S$ is the safety property we want to prove.
In this pattern, control and plant are repeated any number of times, as indicated with the nondeterministic repetition operator $\prepeat{}$.

We use \(\freevars{\phi}\) to refer to the free variables and \(\boundvars{\phi}\) to refer to the bound variables of formula \(\phi\) (accordingly for terms and programs) \cite{DBLP:journals/jar/Platzer17}.

\subsection{Use Case: Water Treatment Testbed}
\begin{figure}
  \centering
  \includegraphics[width=0.48\textwidth]{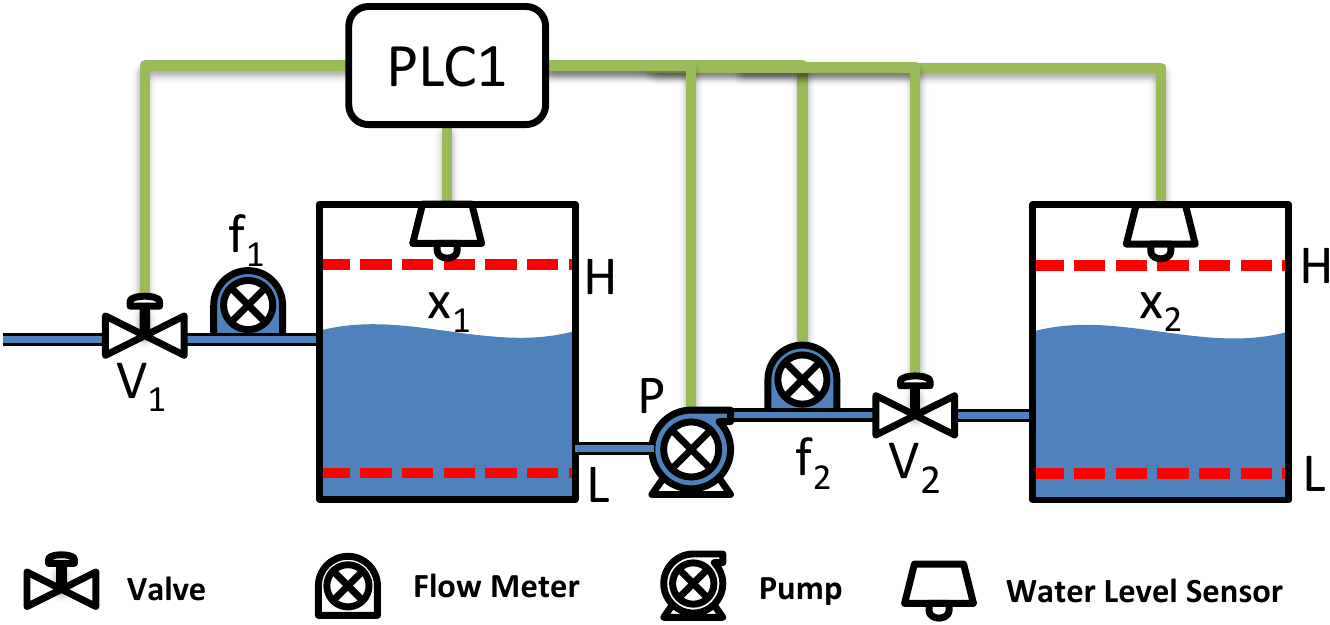}
  \caption[First Stage of Water Treatment Testbed]{The first process control components for a water treatment testbed~\cite{mathur2016swat}.}
  \label{fig:swat-simple}
\end{figure} 
As a running example for this paper, we will use a simple water tank component taken from the first of six control processes of a water treatment testbed~\cite{mathur2016swat}, depicted in Figure~\ref{fig:swat-simple}. This process is responsible for taking in water from a raw water source and feeding it into a tank. This water will then be pumped out into a second tank to be treated with chemicals. For this first process, the PLC is responsible for controlling the inflow of water for both tanks by opening or closing valves, $V_1$ and $V_2$, as well as the outflow of water to the second tank by running the pump, $P$. The PLC monitors the water level of both water tanks, $x_1$ and $x_2$, to ensure that $V_1$ and $V_2$, respectively, are closed before each respective tank overflows beyond an upper bound, $H$. Furthermore, the PLC is responsible for protecting the outflow pump, $P$, by ensuring that the pump is off if the water level of $x_1$ is below a lower threshold, $L$, or if the flow rate of the pump, $f_2$, is below a certain lower threshold, $F_L$ (not pictured in Figure~\ref{fig:swat-simple}).
\begin{figure}[t]
\centering
\begin{lstlisting}[
basicstyle=\footnotesize,
language=ST
]
PROGRAM prog0

  VAR_INPUT x1, x2, f1, f2 : REAL;  END_VAR  
  VAR_OUTPUT  V1, V2, P : BOOL;     END_VAR
  
  IF(x1 >= H1) THEN	         V1:=0;    ELSE
  IF(x1 <= L1) THEN          V1:=1;    END_IF; 
                                       END_IF;
  IF(x2 <= L2) THEN    P:=1; V2:=1;    END_IF;
  IF(x1 <= LL OR f2 <= FL OR x2 >= H2) THEN 
                       P:=0; V2:=0;    END_IF;
  
END_PROGRAM

CONFIGURATION Config0
  RESOURCE Res0 ON PLC
    TASK Main(INTERVAL:=T#1 s, PRIORITY:=0);
      PROGRAM Inst0 WITH Main : prog0;
  END_RESOURCE
END_CONFIGURATION
\end{lstlisting}
\caption[ST Program for First Process of Water Treatment Testbed]{ST program for simplified process control of the system in Figure~\ref{fig:swat-simple}}
\label{fig:swat-simple-st-code}
\end{figure}
Figure~\ref{fig:swat-simple-st-code} shows a simplified representation of the actual ST code that is loaded onto the PLC for a particular sample rate of $\scduration$ for all the associated sensors. In this model, the flow rate for the incoming raw water, $f_1$, is not incorporated into the process control. The real system simply monitors the value of this flow rate without establishing a physical dependency. The upper limits of the water tank level, $H_1$ and $H_2$, and the lower limits, $L_1$ and $L_2$, represent trigger levels that are below and above, respectively, the actual safety thresholds, $H_H$ and $L_L$. The trigger values were determined empirically in \cite{mathur2016swat}; in our proofs, we will find and verify symbolic characterizations of these trigger values. 
This simple model will be used throughout the paper to illustrate how an existing ST program can be systematically compiled to the discrete control of a hybrid program and updated if necessary to ensure the safe operation of the ICS.

\section{Compilation of Terms}\label{sec:translation-of-terms}

\noindent\textbf{Compilation approach overview.}
Compilation between ST and hybrid programs bases on two main ingredients: the syntax of the languages, given in grammars, define their notation; the language semantics give meaning to the syntactic constructs.
Compilation translates from one syntax to another, but it must be done in a way that preserves the semantics of the compiled programs.

With compilation rules, we define how to compile a term, formula, or program in the source syntax into a corresponding term, formula, or program of the target syntax.
Each rule will compile a certain program operator, and often invoke compilation on the operands.
For example, \(\compilehpeq{\phi \land \psi} \compileop \compilehpeq{\phi} ~\text{AND}~ \compilehpeq{\psi}\) compiles conjunction \(\land\) in hybrid program formulas into conjunction \(\text{AND}\) in ST of the recursively compiled sub-formulas \(\phi\) and \(\psi\).
Here, \(\compilehpeq{\phi \land \psi}\) means that we compile hybrid program formula \(\phi \land \psi\) into an ST formula; the operator \(\compileop\) describes how the compilation is done.

With proofs of compilation correctness we then show that the compilation rules preserve the semantics \emph{in a way that will allow us to conclude safety of an ST program from a safety proof of a hybrid program}.
The proofs will exploit the recursive nature of the compilation rules and apply \emph{structural induction} on the program syntax constructs, where we inductively justify each compilation rule from its easier pieces, basing on the hypothesis that the easier pieces are correctly built from the base constructs (e.g. complicated terms built from numbers and variables).

For terms and propositional formulas, the compilation rules are mostly straightforward.
The main syntactic difference is between nondeterministic choices in hybrid programs and if-then-else constructs in ST.
Aligning the semantics in the compilation correctness proofs, however, requires more work: the semantics of ST is given as an operational semantics~\cite{darvas2017plc}, which describes the effects of taking a step in a program, whereas the semantics of hybrid programs is denotational, which describes how states are related through operations of a program.

\noindent\textbf{Term compilation overview.}
In this section, we will define birectional compilation rules of the arithmetic terms in both hybrid programs and ST for PLCs.
The terms of ST are the leaf elements of ST expressions that represent the values stored in the PLC's memory and directly affect the sensing and actuation of the cyber-physical system for a particular context. As such, these values will need to be abstracted to represent the terms of an equivalent hybrid program. 
We will first discuss syntax of the terms in both languages and then define the semantics-preserving compilation.

\noindent\textbf{Notation.} We write \compilehp{\theta} for the result of compiling a hybrid program term, $\theta$, to an ST term, and we write \compilest{\theta} to represent compiling an ST term to a hybrid program term. 
We write \(\compilehpeq{\theta} \compileop s\) when $s$ is the result of compiling $\theta$ to an ST term, and \(\compilesteq{s} \compileop \theta\) when $\theta$ is the result of compiling $s$ to a \dl term.
This notation will also be used for the bidirectional compilation of formulas and programs.


\subsection{Grammar Definitions}
In order to compile terms between both languages while preserving the respective semantics, we first define the grammar for both languages.

\noindent\textbf{Grammar of ST terms.} The terms of ST considered in this paper are defined by the grammar:
\[
 \theta,\eta ::= a~\mathlarger{|}~ x~\mathlarger{|}~-\theta~\mathlarger{|}~\theta \arithmeticOp \eta ~\text{where } \arithmeticOp\ \in \{+,-,*,/\,,**\}
\]

\noindent and where $a$ is a number literal, $x \in V$ is an ST variable, and $V$ is the subset of all ST variables, and both number literals and variables are restricted to \texttt{LReal}\footnote{LReal variables are 8 byte values represented as floating points from the IEC 60559 standard.} of the numeric elementary data types defined by the IEC 61131-3 standard.

\noindent\textbf{Grammar of \dl terms.} The translatable terms of \dl and hybrid programs \cite{DBLP:journals/jar/Platzer08,DBLP:journals/jar/Platzer17} are defined by the grammar:
\[
 \theta,\eta ::= x ~\mathlarger{|}~ n
 ~\mathlarger{|}~ \theta \arithmeticOp \eta
 ~\text{where } \arithmeticOp\ \in \{+,-,\cdot,/\,,\mathbin{\char`\^}\}
\]
\noindent and where $x \in V$ is a variable and $V$ is the set of all variables. 
The grammar allows the use of number literals \(n\) as functions without arguments that are to be interpreted as the value they represent. 

Next, we provide the bidirectional compilation rules of terms and prove term compilation correctness.

\subsection{Compilation Rules}

We will first define compilation rules for the terminal expressions, referred to as atomic terms, and compose the other expressions following the recursive nature of the grammars. 

\noindent\textbf{Atomic terms.} Atomic terms in hybrid programs include variables and number literals. For the sake of simplicity, we do not consider functions within hybrid programs as we want to focus on the core elements of discrete control, and \emph{we assume that the data type \texttt{LReal} of the IEC 61131-3 standard represents mathematical reals}.
In practice, when a PLC implements \texttt{LReal} with floating point numbers, this assumption can be met with an appropriate sound encoding using, for example, interval arithmetic as verified in \cite{bohrer2018veriphy}.

\name compiles number literals and variables of hybrid programs, which evaluate to mathematical reals, to numbers and variables of data type \texttt{LReal} of the IEC 61131-3 standard as follows:
Number literals \(n\) and variables \(x\) then do not need conversion, so
\(
  \compilehpeq{n}~\compileop~n\) and \(\compilesteq{n}~\compileop~n 
\), as well as 
\(
  \compilehpeq{x}~\compileop~x\) and \(\compilest{x}~\compileop~x
\).

\noindent Next, we inductively define the compilation rules for arithmetic operations.

\noindent\textbf{Arithmetic operations.} Arithmetic operations are similarly defined in an inductive fashion in similar syntax in both languages, which makes translation of terms $\theta$ and $\eta$ straightforward as follows:
\begin{align*}
&\compilehpeq{-(x)} \compileop -(\compilehpeq{x}) &&\compilesteq{-(x)} \compileop -(\compilesteq{x})\\
&\compilehpeq{\theta \arithmeticOp \eta} \compileop \compilehpeq{\theta} \arithmeticOp \compilehpeq{\eta} &&\compilesteq{\theta \arithmeticOp \eta} \compileop \compilesteq{\theta} \arithmeticOp \compilesteq{\eta}\\
&\compilehpeq{\theta \cdot \eta} \compileop \compilehpeq{\theta} * \compilehpeq{\eta} &&\compilesteq{\theta * \eta} \compileop \compilesteq{\theta} \cdot \compilesteq{\eta}\\
&\compilehpeq{\theta \mathbin{\char`\^} \eta} \compileop \compilehpeq{\theta} \texttt{**} \compilehpeq{\eta} &&\compilesteq{\theta \texttt{**} \eta} \compileop \compilesteq{\theta} \mathbin{\char`\^} \compilesteq{\eta}
\end{align*}
\noindent where $\arithmeticOp\ \in \{+,-,/\}$.

We now provide the Lemmas for correctness of the translation of terms in both directions.
We follow \cite{darvas2017plc} and write \((\theta,\iget[state]{\I}) \stopeval c\) to express that in ST a term \(\theta\) evaluates to \(c\) in context \(\iget[state]{\I}\).
We write \(\ivaluation{\I}{\theta} = c\) to express that in \dL a term \(\theta\) evaluates to \(c\) at state \(\iget[state]{\I}\).
Details on the \dL semantics and ST semantics used in the proof can be found in Appendix~\ref{sec:dlsemantics} and Appendix~\ref{sec:structured-text-semantics}, respectively.

\begin{lemma}[Correctness of term compilation]\label{lemma:terms}
Assuming \(\texttt{LReal}=\mathbb{R}\): if \((\theta,\iget[state]{\I}) \stopeval c\) then \(\ivaluation{\I}{\compilesteq{\theta}} = c\); conversely, if \(\ivaluation{\I}{\theta} = c\) then \((\compilehpeq{\theta},\iget[state]{\I}) \stopeval c\).
\end{lemma}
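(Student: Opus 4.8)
The plan is to prove both implications by structural induction on the term \(\theta\), following the recursive structure of the two grammars and the correspondence established by the compilation rules. Since the term grammars of ST and \dL are essentially the same up to the syntactic names of operators (with \(*\) versus \(\cdot\) and \(\texttt{**}\) versus \(\mathbin{\char`\^}\)), the compilation is a bijection on the operator symbols, and the induction will match each ST evaluation step against the corresponding \dL denotation. I would prove the two directions in parallel, since each inductive case requires the same bookkeeping: identify the top-level operator, apply the induction hypothesis to the immediate subterms, and observe that the compilation rule maps the operator to its counterpart whose semantics agree under the assumption \(\texttt{LReal}=\mathbb{R}\).

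First I would establish the base cases. For a variable \(x\), the ST operational semantics gives \((x,\iget[state]{\I}) \stopeval \iget[state]{\I}(x)\), while the \dL semantics gives \(\ivaluation{\I}{x} = \iget[state]{\I}(x)\); since \(\compilesteq{x} \compileop x\) and \(\compilehpeq{x} \compileop x\), both evaluate against the same state component and thus agree. For a number literal \(n\) (respectively \(a\)), the compilation is again the identity, and both semantics interpret the literal as the real value it denotes, so the two values coincide exactly because we assume \(\texttt{LReal}=\mathbb{R}\), which is where the standing assumption is used to equate the floating-point/\texttt{LReal} value with the mathematical real.

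Next I would handle the inductive cases, one per arithmetic operator. For unary negation \(-\theta\), the ST rule evaluates the operand to some \(c\) and returns \(-c\), and by the induction hypothesis \(\ivaluation{\I}{\compilesteq{\theta}} = c\), so \(\ivaluation{\I}{\compilesteq{-\theta}} = \ivaluation{\I}{-\compilesteq{\theta}} = -c\) by the \dL semantics of negation; the converse is symmetric. For a binary operation \(\theta \arithmeticOp \eta\) with \(\arithmeticOp \in \{+,-,/\}\), the ST rule evaluates both operands to \(c_1\) and \(c_2\) and returns \(c_1 \arithmeticOp c_2\); applying the induction hypothesis to each operand and the compilation rule \(\compilesteq{\theta \arithmeticOp \eta} \compileop \compilesteq{\theta} \arithmeticOp \compilesteq{\eta}\), the \dL denotation computes the same real-valued operation on the same arguments. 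The cases for multiplication and exponentiation are identical in structure, except that the compilation rule relabels the operator (\(*\) to \(\cdot\), and \(\texttt{**}\) to \(\mathbin{\char`\^}\)); I would note explicitly that these relabelings denote the same arithmetic function on \(\mathbb{R}\), so the values still agree.

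The main obstacle I anticipate is not the algebra but the careful alignment of the two styles of semantics: ST is given operationally (as a derivation \((\theta,\iget[state]{\I}) \stopeval c\) built from inference rules) whereas \dL is denotational (\(\ivaluation{\I}{\theta}=c\) computed directly on the state). To bridge them cleanly I would argue that an ST evaluation derivation exists and yields \(c\) if and only if the corresponding \dL term denotes \(c\), which requires checking that the operational rules are total and deterministic on this term fragment so that the inductive hypotheses apply to exactly the premises appearing in each rule. The one genuine subtlety is division, where \(\texttt{LReal}\) arithmetic must match real division (including behavior at zero); here I would lean on the assumption \(\texttt{LReal}=\mathbb{R}\) and the sound-encoding remark preceding the lemma to assert that the ST and \dL division operations agree on all inputs for which the ST semantics is defined.
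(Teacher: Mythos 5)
Your proposal matches the paper's own proof, which is exactly this structural induction on terms: base cases for number literals and variables (where the ST rule \((x,\nu)\rightarrow_a \nu(x)\) is matched against the \dL{} valuation \(\nu(x)\)), and inductive cases for negation and the binary arithmetic operators, with the compilation acting as an operator-by-operator relabeling. Your additional remarks on determinism of the operational rules and on division are extra care the paper does not spell out, but they do not change the argument.
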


\luis{TODO: need to explain why LReal represents $\mathbb{R}$}
\begin{proof}See Appendix~\ref{sec:proofs}.\end{proof}
We next define how the compilation of terms is leveraged to compile the \textit{formulas} of both languages in both directions.
\section{Compilation of Formulas}\label{sec:translation-of-formulas}

In this section, we compile modality- and quantifier-free formulas used in tests in hybrid programs and conditional expressions of ST statements. 
As was done with the terms of each language, we first discuss the syntax of the formulas for both languages.

\subsection{Grammar Definitions}

\noindent\textbf{Grammar of ST formulas.} ST formulas are used in conditional expressions defined by the IEC 61131-3 standard as follows. 
\begin{align*}
 \phi, \psi ::= \text{TRUE}&~\mathlarger{|}~\text{FALSE}~\mathlarger{|}~ \theta \relationalOpST \eta ~\mathlarger{|}~ \text{NOT}(\phi) ~\mathlarger{|}~ \phi \logicalOpST \psi\\
 \text{where } \relationalOpST\ &\in \{<,>,>=,<=,<>,=\}\\
 \text{and } \logicalOpST\ &\in \{\text{AND},\text{OR},\text{XOR}\}
\end{align*}

\noindent The values $\text{TRUE}$ and $\text{FALSE}$ represent the two Boolean values a conditional expression can take upon evaluation, $\theta$ and  $\eta$ are ST terms, operator $\relationalOpST$ ranges over relational operators used in ST, operator $\logicalOpST$ ranges over logical operators between two formulas, and \(\text{NOT}(\phi)\) is the logical negation of a formula \(\phi\). 

\noindent\textbf{Grammar of \dl{} formulas.} 
The truncated grammar for modality- and quantifier-free formulas in \dL that we consider in this paper is built using propositional connectives $\lnot$, $\land$, $\lor$, $\limply$, and $\lbisubjunct$~\cite{DBLP:journals/jar/Platzer08} as follows:
\begin{multline*}
 \phi, \psi ::= \ltrue~\mathlarger{|}~ \lfalse ~\mathlarger{|}~ \theta \relationalOpHP \eta ~\mathlarger{|}~ \neg \phi ~\mathlarger{|}~ \phi \logicalOpHP \psi\\
 \text{where } \relationalOpHP\ \in \{<,>,\geq,\leq,=\} \text{ and } \logicalOpHP\ \in \{\land,\lor,\limply,\lbisubjunct\}
\end{multline*}


\noindent and where $\theta$ and $\eta$ are \dl{} terms. 
Given these base grammars, next we present the compilation rules and the formula compilation correctness proof.

\subsection{Compilation Rules}

\noindent\textbf{Atomic formulas.} 
Atomic formulas in both languages comprise the literals \(\ltrue\) and \(\lfalse\) and comparisons of terms and are compiled in a straightforward way:
\begin{alignat*}{2}
 \compilehpeq{\ltrue} &\compileop \text{TRUE} \qquad \compilesteq{\text{TRUE}} &&\compileop \ltrue\\
 \compilehpeq{\lfalse} &\compileop \text{FALSE} \qquad \compilesteq{\text{FALSE}} &&\compileop \lfalse
\end{alignat*}

Comparisons are compiled as follows:
\begin{alignat*}{3}
 \compilehpeq{\theta=\eta} &\compileop \compilehpeq{\theta}  =  \compilehpeq{\eta} \qquad &&\compilesteq{\theta=\eta} \compileop \compilesteq{\theta}  =  \compilesteq{\eta}\\
 \compilehpeq{\theta\neq\eta} &\compileop \compilehpeq{\theta}  <>  \compilehpeq{\eta} &&\compilesteq{\theta <> \eta} \compileop \compilesteq{\theta}  \neq  \compilesteq{\eta}\\
 \compilehpeq{\theta>\eta} &\compileop \compilehpeq{\theta}  >  \compilehpeq{\eta} &&\compilesteq{\theta>\eta} \compileop \compilesteq{\theta}  >  \compilesteq{\eta}\\
 \compilehpeq{\theta\geq\eta} &\compileop \compilehpeq{\theta} >= \compilehpeq{\eta} &&\compilesteq{\theta >= \eta} \compileop \compilesteq{\theta} \geq \compilesteq{\eta}\\
 \compilehpeq{\theta<\eta} &\compileop \compilehpeq{\theta} < \compilehpeq{\eta} &&\compilesteq{\theta<\eta} \compileop\compilesteq{\theta} < \compilesteq{\eta}\\
 \compilehpeq{\theta\leq\eta} &\compileop \compilehpeq{\theta} <= \compilehpeq{\eta} &&\compilesteq{\theta <= \eta} \compileop \compilesteq{\theta} \leq \compilesteq{\eta}
\end{alignat*}

The compilation rules for the atomic formulas are the basis for compiling compositional formulas.

\noindent\textbf{Logical formulas.} Logical connectives \(\lnot,\land,\lor\) are straightforward, whereas \(\limply,\lbisubjunct\) are rewritten in terms of \(\land,\lor\) before compilation (similar for \text{XOR}):
\[
\compilehpeq{\lnot(\phi)} \compileop \text{NOT}(\compilehpeq{\phi}) \qquad \compilesteq{\text{NOT}(\phi)} \compileop \lnot(\compilesteq{\phi})
\]
\begin{alignat*}{3}
 &\compilehpeq{\phi \land \psi} &&\compileop\compilehpeq{\phi} ~\text{AND}~ \compilehpeq{\psi}\\
 &\compilesteq{\phi ~\text{AND}~ \psi} &&\compileop \compilesteq{\phi} \land \compilesteq{\psi}\\
 &\compilehpeq{\phi \lor \psi} &&\compileop \compilehpeq{\phi}  ~\text{OR}~ \compilehpeq{\psi}\\
 &\compilesteq{\phi ~\text{OR}~ \psi} &&\compileop \compilesteq{\phi} \lor  \compilesteq{\psi}\\
& \compilehpeq{\phi \limply \psi} &&\compileop\compilehpeq{\lnot\phi \lor \psi}\\
& \compilehpeq{\phi \lbisubjunct \psi} &&\compileop \compilehpeq{(\lnot \phi \land \lnot \psi) \lor (\phi \land \psi)}\\
&\compilesteq{\phi ~\text{XOR}~ \psi} && \compileop \compilesteq{\phantom{\text{OR}~}(\text{NOT}(\phi) ~\text{AND}~ \psi)\\ 
&&&\phantom{\compilesteq{(~}} \text{OR}~ (\text{NOT}(\psi) ~\text{AND}~ \phi)}
\end{alignat*}
\luis{TODO: fix "awkward" order; maybe put all ST first and then HP, but then it would take up a significant amount of space}

We now prove correctness of the compilation of formulas in both directions.
In ST, we write \((\phi,\iget[state]{\I}) \stopeval \top\) and in \dL \(\iget[state]{\I} \models \phi\) to say that formula \(\phi\) is true at state \(\iget[state]{\I}\).
\begin{lemma}[Correctness of formula compilation]\label{lemma:formula}
Formulas evaluate equivalently: \(\iget[state]{\I} \models \phi\) iff \((\compilehpeq{\phi},\iget[state]{\I}) \stopeval \top\) and, conversely, \((\phi,\iget[state]{\I}) \stopeval \top\) iff \(\iget[state]{\I} \models \compilesteq{\phi}\).
\end{lemma}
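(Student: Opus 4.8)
The plan is to prove both biconditionals by \emph{structural induction} on the formula, mirroring the recursive compilation rules, with the term-compilation Lemma~\ref{lemma:terms} supplying the base case for comparisons. Since the compilation rules for the two directions are symmetric, I would carry the \dL-to-ST claim ($\iget[state]{\I} \models \phi$ iff $(\compilehpeq{\phi},\iget[state]{\I}) \stopeval \top$) and the ST-to-\dL claim ($(\phi,\iget[state]{\I}) \stopeval \top$ iff $\iget[state]{\I} \models \compilesteq{\phi}$) in parallel, case by case on the top-level connective, noting that each compilation rule is its own inverse up to operator renaming.

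For the \textbf{base cases}, the literals are immediate: $\compilehpeq{\ltrue} \compileop \text{TRUE}$, and both $\iget[state]{\I} \models \ltrue$ and $(\text{TRUE}, \iget[state]{\I}) \stopeval \top$ hold unconditionally (dually for $\lfalse$). For a comparison $\theta \relationalOpHP \eta$, I would unfold the \dL semantics, reducing $\iget[state]{\I} \models \theta \relationalOpHP \eta$ to the numerical relation $\ivaluation{\I}{\theta} \relationalOpHP \ivaluation{\I}{\eta}$, and unfold the ST operational semantics of the compiled comparison $\compilehpeq{\theta} \relationalOpST \compilehpeq{\eta}$, reducing its truth to the same relation on the values to which $\compilehpeq{\theta}$ and $\compilehpeq{\eta}$ evaluate. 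Lemma~\ref{lemma:terms} identifies those values with $\ivaluation{\I}{\theta}$ and $\ivaluation{\I}{\eta}$; the one remaining check is that each syntactic operator is matched with its counterpart under compilation (e.g.\ $\geq$ with $>=$, $\neq$ with $<>$), which is exactly what the comparison compilation table stipulates.

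For the \textbf{inductive step}, the connectives $\lnot,\land,\lor$ compile homomorphically to $\text{NOT},\text{AND},\text{OR}$, so each case follows by applying the induction hypothesis to the immediate subformulas and then matching the truth table of the \dL connective against that of its ST counterpart. The derived connectives $\limply$ and $\lbisubjunct$ (and, in the ST direction, $\text{XOR}$) are first rewritten into $\land,\lor,\lnot$ before compilation; here I would argue that the rewriting is a validity-preserving equivalence (e.g.\ $\phi \limply \psi$ and $\lnot\phi \lor \psi$ are satisfied by exactly the same states), so that the already-established cases for $\land,\lor,\lnot$ apply to the rewritten formula, whose constituents are the strict subformulas $\phi$ and $\psi$. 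This keeps the induction well-founded despite the syntactic growth introduced by the rewriting.

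The step I expect to be the main obstacle is the negation case, because it is where the mismatch between the \emph{denotational} \dL semantics and the \emph{operational} ST semantics bites. Establishing $\iget[state]{\I} \models \lnot\phi$ iff $(\text{NOT}(\compilehpeq{\phi}),\iget[state]{\I}) \stopeval \top$ requires not merely the induction hypothesis but also that ST formula evaluation is \emph{total and deterministic}: in a given context every ST formula evaluates to exactly one of $\top$ or $\bot$. Only then does ``$\compilehpeq{\phi}$ does not evaluate to $\top$'' coincide with ``$\compilehpeq{\phi}$ evaluates to $\bot$'', which is what lets $\text{NOT}$ invert the induction hypothesis cleanly. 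I would therefore first record this totality and determinacy property from the evaluation rules of the ST fragment in Appendix~\ref{sec:structured-text-semantics}, and then invoke it uniformly wherever a truth value is complemented.
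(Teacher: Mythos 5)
Your proposal is correct and follows essentially the same route as the paper's proof: a structural induction on formulas whose base case for comparisons is discharged by Lemma~\ref{lemma:terms} and whose inductive step matches each \dL{} connective against its ST counterpart. Your explicit observation that the negation case needs totality and determinacy of ST formula evaluation (so that ``not evaluating to $\top$'' coincides with ``evaluating to $\bot$'') is a point the paper's terse proof leaves implicit, and your treatment of the derived connectives $\limply$, $\lbisubjunct$, and XOR via semantics-preserving rewriting is consistent with how the compilation rules are stated.
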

\begin{proof}See Appendix~\ref{sec:proofs}.\end{proof}

\section{Compilation of Programs}\label{sec:translation-of-programs}

Now that we know how to correctly compile terms and formulas in both languages, we turn to compiling program constructs.
Since these programs, when executed on a PLC, interact with the physical world, our overall goal is to provably establish safety properties of the physical behavior of an ICS.
To this end, we again show compilation correctness with respect to the semantics of the languages, which will serve as a stepping stone to describe the program effect in the larger context of the PLC scan cycle.

We first provide an overview of our hybrid system model of a PLC scan cycle, before we introduce the grammars and compilation rules for both languages and prove compilation correctness.

\subsection{Scan Cycle Hybrid System Model}
\begin{figure}[htb]
\centering
\includegraphics[width=\columnwidth]{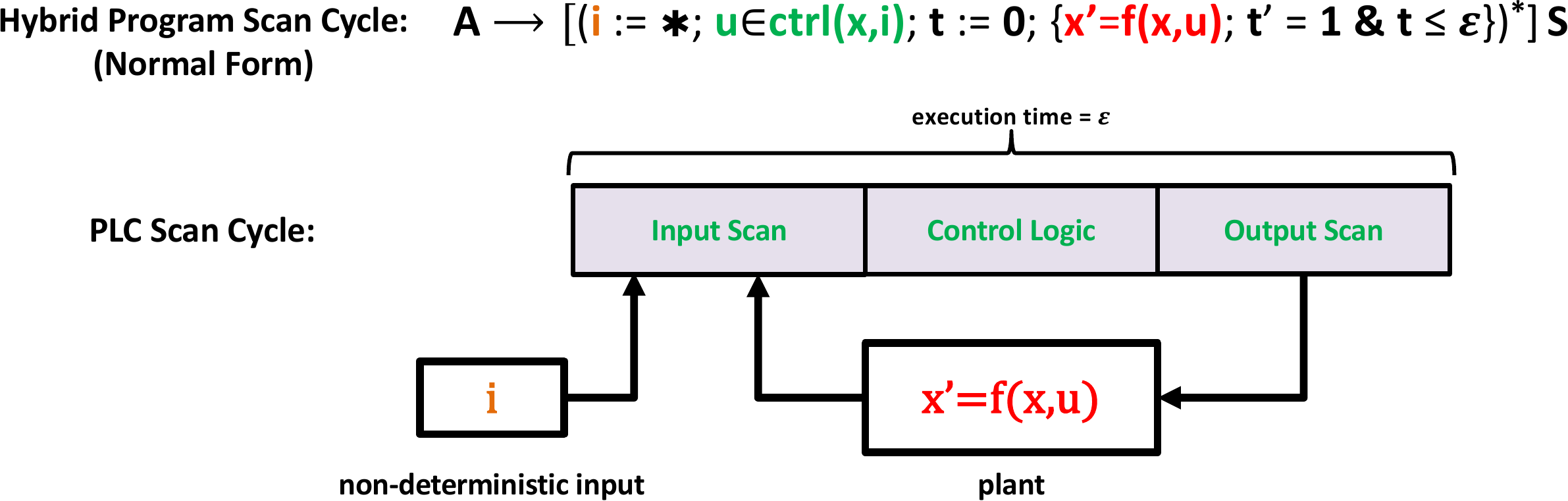}
\caption{Hybrid system model of a PLC scan cycle}
\label{fig:scan-cycle-translation}
\end{figure}
We model the PLC scan cycle as a hybrid program of a particular shape---referred to as a hybrid program in \emph{scan cycle normal form}---in order for safety properties verified about a hybrid program to directly transfer to its implementation in ST. 

Figure~\ref{fig:scan-cycle-translation} provides an overview of the components of a hybrid program in scan cycle normal form and how they relate to a PLC scan cycle. 
A PLC scan cycle is a periodic process that, on each iteration, scans the inputs, then executes the control logic to set outputs, and finally forwards outputs to the actuators. 
The total scan cycle duration in this abstracted model is $\scduration$.

Our hybrid program model of such a scan cycle uses nondeterministic assignments $\extinput$ to model arbitrary external input to the PLC system, such as sensor values whose state cannot be estimated or user input from a user interface.
Based on the current state \(x\) and inputs \(i\), the controller \(\ctrl\) then chooses control actions \(u\) from a set of possible choices.
The plant modeled by $\plant$ continuously evolves the system state \(x\) according to the control actions along the differential equations \(\D{x}=f(x,u)\) and keeps track of the scan cycle duration bound \(\scduration\) with a clock \(t\).

\begin{definition}[Scan cycle normal form]\label{def:scancyclenormalform}
We call a hybrid program with shape \(\hpscancycle\) a program in \emph{scan cycle normal form}.
It is safe, if formula
\(A \limply \dbox{\prepeat{(\hpscancycle)}}S\) is valid.
\end{definition}

In the following subsections, we detail how a controller \(\ctrl\) is translated into an ST program and its associated configuration.
We will leave the generation of code for nondeterministic inputs and physical plant components (e.g., monitors that check model and true system execution for compliance) as future work. 

We will use the operational semantics of ST and dynamic semantics of hybrid programs to ensure that the compilation preserves meaning.
Additionally, we will use the static semantics of hybrid programs in terms of their bound and free variables to derive configuration information for the PLC code (e.g., distinguish between input and output variables).


\subsection{Grammar Definitions}
We present the respective grammars for programs in each language.

\noindent\textbf{Grammar of ST programs.} ST programs refer to the sequence of statements defined by the IEC 61131-3 standard that form entire ST programs. 
We consider ST statements \stprogram{1} and \stprogram{2} as follows:
\begin{align*}
\stprogrameq{1}, \stprogrameq{2} ::= x := \theta ~\mathlarger{|}~ &\stif{\phi}{\stprogrameq{1}}{\stprogrameq{2}} ~\mathlarger{|}\\
& \stifthen{\phi}{\stprogrameq{1}}
~\mathlarger{|}~ \stprogrameq{1};\stprogrameq{2}
\end{align*}

Where $x := \theta$ is assignment of an ST term $\theta$ to variable $x$, $\stif{\phi}{\stprogrameq{1}}{\stprogrameq{2}}$ is a conditional statement where $\stprogrameq{1}$ is executed if $\phi$ is true and $\stprogrameq{2}$ is executed otherwise, and $\stprogrameq{1};\stprogrameq{2}$ is the sequential composition of ST programs where $\stprogrameq{2}$ executes after $\stprogrameq{1}$ has finished its execution. While the structured grammar can support several other control structures such as finitely bounded loops and case-statements, these structures can be represented as a series of if-then-else statements. The \dl{} grammar is composed in a similar fashion. 

\noindent\textbf{Grammar of \dl{} programs.} The grammar for PLC-translatable \dl{} hybrid programs 
is defined as follows.
\begin{align*}
 \alpha, \beta~~::=~~\humod{x}{\theta}~~\mathlarger{|}~~ \pchoice{(\ptest{\phi};\alpha)}{\beta}~~\mathlarger{|}~~
& \pchoice{(\ptest{\phi};\alpha)}{(\ptest{\lnot\phi};\beta)} ~~\mathlarger{|}\\
& \pchoice{(\ptest{\phi};\alpha)}{\ptest{\lnot\phi}}~~\mathlarger{|}~~ \alpha;\beta
\end{align*}
\noindent Where $\humod{x}{\theta}$ are assignments of the value of a term $\theta$ to the variable $x$, $\pchoice{(\ptest{\phi};\alpha)}{\beta}$ is a guarded execution of \(\alpha\) (possible if \(\phi\) is true) and default \(\beta\) (can be executed nondeterministically regardless of \(\phi\) being true or false), \(\pchoice{(\ptest{\phi};\alpha)}{(\ptest{\lnot\phi};\beta)}\) is an if-then-else conditional statement, \(\pchoice{(\ptest{\phi};\alpha)}{\ptest{\lnot\phi}}\) is an if-then conditional statement without else, and \(\alpha;\beta\) is a sequential composition
~\cite{DBLP:journals/jar/Platzer08,DBLP:journals/jar/Platzer17}. 
Given these base grammars for the programs, we now present the compilation rules and the associated correctness proofs that will allow us to conclude safety of ST programs from safety proofs of hybrid programs. 
Preserving safety will allow us to compile existing ST programs into hybrid programs and analyze their interaction with the physical plant for safety, and conversely compile the controllers of hybrid programs into ST programs for execution on a PLC.

\subsection{Compilation Rules}

\noindent\textbf{Deterministic assignment.} 
Assignments of terms to variables in hybrid programs represent the core of discrete state transitions in a hybrid system. 


The syntax and operational effect of a discrete assignment is the same in both languages, so compilation is straightforward:
\begin{align*}
 \compilehpeq{\humod{x}{\theta}} &\compileop \compilehpeq{x} := \compilehpeq{\theta}\\
 \compilesteq{x := \theta} &\compileop \humod{\compilesteq{x}}{\compilesteq{\theta}}
\end{align*}

The static semantics of discrete assignments in hybrid programs provides information about input and output variables of the generated ST code: an assignment contributes \(\boundvars{\humod{x}{\theta}} = \{x\}\) to the set of output variables, and \(\freevars{\humod{x}{\theta}}=\freevars{\theta}\) to the set of input variables \cite{DBLP:journals/jar/Platzer17}.

\noindent\textbf{Sequential composition programs.} The sequential composition of two hybrid  programs $\alpha$ and $\beta$ lets the hybrid program $\beta$ starts executing after $\alpha$ has finished, meaning that $\beta$ never starts if the program $\alpha$ does not terminate. 
Sequential composition of ST statements has identical meaning, and so compilation between ST and hybrid programs is straightforward as follows:
\[
 \compilehpeq{\alpha;\beta} \compileop \compilehpeq{\alpha}  ; \compilehpeq{\beta} \qquad \compilesteq{\alpha;\beta} \compileop \compilesteq{\alpha}  ; \compilesteq{\beta}
\]

A sequential composition contributes the input and output variables of both its sub-programs: it has output variables \(\boundvars{\alpha;\beta}=\boundvars{\alpha} \cup \boundvars{\beta}\) and input variables \(\freevars{\alpha;\beta}=\freevars{\alpha} \cup (\freevars{\beta} \setminus \mustboundvars{\alpha})\). 
Note that the input variables are not simply the union of both sub-programs, since some of the free variables of \(\beta\) might be bound on all paths in \(\alpha\)---in \(\mustboundvars{\alpha}\)---and therefore no longer be free in the sequential composition \cite{DBLP:journals/jar/Platzer17}.

\begin{remark}[ST Task Execution Timing]
 The execution of a series of statements with respect to sequential composition assumes that the statements execute atomically, which is defined in the transition semantics of hybrid programs. We do not model the preemption of higher priority tasks as the modeling of the PLC's task scheduling is beyond of the scope of this paper and left for future research. 

\name assumes that the developer designs a system with multiple tasks such that (1) the execution time of a highest priority task is  less than its period and that (2) the total execution of all tasks is less than the period of the lowest priority tasks~\cite{allenbradley2008}. 
\end{remark}

\noindent\textbf{Conditional programs.} 
In the translatable fragment of hybrid programs we allow tests to occur only as the first statement of the branches in nondeterministic choices, and we allow only nondeterministic choices that are guarded with tests.
A nondeterministic choice between hybrid programs $\ptest{\phi};\alpha$ and $\beta$ executes either hybrid program and is resolved on a PLC by favoring execution of $\ptest{\phi};\alpha$ over $\beta$ in an if-then-else statement. 
The test statement \(\ptest{\phi}\) in the beginning avoids backtracking. 
The compilation is defined as follows.
\begin{align*}
\compilehpeq{\pchoice{(\ptest{\phi};\alpha)}{\beta}} & \compileop \stif{\compilehpeq{\phi}}{\compilehpeq{\alpha}}{\compilehpeq{\beta}}\\ 
\compilehpeq{\pchoice{(\ptest{\phi};\alpha)}{(\ptest{\lnot\phi};\beta)}} & \compileop \stif{\compilehpeq{\phi}}{\compilehpeq{\alpha}}{\compilehpeq{\beta}}\\ 
\compilehpeq{\pchoice{(\ptest{\phi};\alpha)}{\ptest{\lnot\phi}}} & \compileop \stifthen{\compilehpeq{\phi}}{\compilehpeq{\alpha}}
\end{align*}

The static semantics combines the input and output variables of both programs: output variables \(\boundvars{\pchoice{(\ptest{\phi};\alpha)}{\beta}} = \boundvars{\pchoice{(\ptest{\phi};\alpha)}{(\ptest{\lnot\phi};\beta)}} = \boundvars{\alpha} \cup \boundvars{\beta}\) and input variables \(\freevars{\pchoice{(\ptest{\phi};\alpha)}{\beta}} = \freevars{\pchoice{(\ptest{\phi};\alpha)}{(\ptest{\lnot\phi};\beta)}} = \freevars{\phi} \cup \freevars{\alpha} \cup \freevars{\beta}\).

Because we only consider loop-free semantics, we avoid having to enforce backtracking for deep tests that may exist in $\alpha$ or $\beta$. Instead, the tests will simply be compiled as nested conditional programs. 
\andre{''backtracking for deep tests..." unclear}

ST conditional programs statements compile to guarded nondeterministic choices in hybrid programs as follows:
\begin{align*}
&\compilesteq{\stif{\phi}{\alpha}{\beta}}~\compileop\\ 
& \qquad \pchoice{(\ptest{\compilesteq{\phi}};\compilesteq{\alpha})}{(\ptest{\lnot\compilesteq{\phi}};\compilesteq{\beta})}\\
&\compilesteq{\stifthen{\phi}{\alpha}} \compileop
\pchoice{(\ptest{\compilesteq{\phi}};\compilesteq{\alpha})}{\ptest{\lnot\compilesteq{\phi}}}
\end{align*}

Next, we prove compilation correctness that will allow us to transfer safety proofs of hybrid programs to ST programs.
We write \((\stprg{\stprogrameq{1}},\iget[state]{\I}) \stprgeval (\stprg{\stprogrameq{2}},\iget[state]{\It})\) to say that program \(\stprogrameq{1}\) executed in context \(\iget[state]{\I}\) transitions to a new context \(\iget[state]{\It}\) with remaining program \(\stprogrameq{2}\).
We write \((\iget[state]{\I},\iget[state]{\It}) \in \iaccess[\alpha]{\I}\) to say that the final state \(\iget[state]{\It}\) is reachable from the initial state \(\iget[state]{\I}\) by running the hybrid program \(\alpha\).

\begin{lemma}[Correctness of ST to HP compilation]\label{lemma:st-to-hp}
All states reachable with the ST control program are also reachable by the target hybrid program: If \((\stprg{\stprogrameq{1}},\iget[state]{\I}) \stprgeval (\stprg{\stskip},\iget[state]{\It})\) then \((\iget[state]{\I},\iget[state]{\It}) \in \iaccess[\compilesteq{\stprogrameq{1}}]{\I}\) for all \(\iget[state]{\I},\iget[state]{\It}\), where \textbf{skip} denotes the end of code for a scan cycle.
\end{lemma}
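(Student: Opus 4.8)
The plan is to prove Lemma~\ref{lemma:st-to-hp} by structural induction on the ST program \(\stprogrameq{1}\), following the four cases of the ST program grammar, and in each case relating the small-step operational behavior of ST to the denotational transition relation \(\iaccess[\compilesteq{\stprogrameq{1}}]{\I}\) of the compiled hybrid program. Throughout I would appeal to Lemma~\ref{lemma:terms} to transfer term evaluations and to Lemma~\ref{lemma:formula} to transfer the truth of guard formulas across compilation, so that only the program-level bookkeeping remains in each case.

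First, the base case of assignment \(\stprogrameq{1} = (\stassign{x}{\theta})\). By inversion on the ST operational semantics, \((\stassign{x}{\theta}, \iget[state]{\I}) \stprgeval (\stskip, \iget[state]{\It})\) forces \((\theta, \iget[state]{\I}) \stopeval c\) and \(\iget[state]{\It}\) to agree with \(\iget[state]{\I}\) except for assigning \(c\) to \(x\). Lemma~\ref{lemma:terms} gives \(\ivaluation{\I}{\compilesteq{\theta}} = c\), and the denotational semantics of the compiled assignment \(\humod{x}{\compilesteq{\theta}}\) relates exactly \(\iget[state]{\I}\) to the state agreeing with \(\iget[state]{\I}\) off \(x\) and mapping \(x\) to \(\ivaluation{\I}{\compilesteq{\theta}} = c\); that state is \(\iget[state]{\It}\), so \((\iget[state]{\I}, \iget[state]{\It}) \in \iaccess[\humod{x}{\compilesteq{\theta}}]{\I}\) as required.

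Next, the two conditional cases. For \(\stif{\phi}{\alpha}{\beta}\) I would split on whether \((\phi, \iget[state]{\I}) \stopeval \top\). If the guard holds, the ST semantics reduces to executing \(\alpha\) from \(\iget[state]{\I}\) to \(\iget[state]{\It}\), so the inductive hypothesis yields \((\iget[state]{\I}, \iget[state]{\It}) \in \iaccess[\compilesteq{\alpha}]{\I}\); Lemma~\ref{lemma:formula} gives \(\iget[state]{\I} \models \compilesteq{\phi}\), hence \((\iget[state]{\I}, \iget[state]{\I}) \in \iaccess[\ptest{\compilesteq{\phi}}]{\I}\), and composing through the test places \((\iget[state]{\I}, \iget[state]{\It})\) in the left branch \(\iaccess[\ptest{\compilesteq{\phi}};\compilesteq{\alpha}]{\I}\) of the nondeterministic choice, hence in \(\iaccess[\compilesteq{\stprogrameq{1}}]{\I}\) by the union semantics of \(\cup\). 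The false case is symmetric using the \(\ptest{\lnot\compilesteq{\phi}}\) branch, and the if-then-without-else case is the special case in which the false branch performs no assignment, so \(\iget[state]{\It} = \iget[state]{\I}\) is matched exactly by the test-only branch \(\ptest{\lnot\compilesteq{\phi}}\).

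The sequential composition case \(\stprogrameq{1} = \alpha;\beta\) is where the main obstacle lies. Because the ST semantics is small-step and \(\stprgeval\) denotes the reflexive-transitive closure of single steps, I cannot read off an intermediate state directly from the program structure; instead I would first establish a decomposition lemma stating that whenever \((\alpha;\beta, \iget[state]{\I}) \stprgeval (\stskip, \iget[state]{\It})\), there is an intermediate context \(\iget[state]{\Isig}\) with \((\alpha, \iget[state]{\I}) \stprgeval (\stskip, \iget[state]{\Isig})\) and \((\beta, \iget[state]{\Isig}) \stprgeval (\stskip, \iget[state]{\It})\). This decomposition is proven by induction on the length of the reduction sequence, using the congruence rule that steps the first component until it reaches \(\stskip\) and then hands control to the second; the care needed here is to show that no step of the trace ``mixes'' the two components and that the first moment at which \(\alpha\) reaches \(\stskip\) is exactly the split point \(\iget[state]{\Isig}\). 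Granting the decomposition, the inductive hypotheses applied to \(\alpha\) at \(\iget[state]{\I}\) and to \(\beta\) at \(\iget[state]{\Isig}\) give \((\iget[state]{\I}, \iget[state]{\Isig}) \in \iaccess[\compilesteq{\alpha}]{\I}\) and \((\iget[state]{\Isig}, \iget[state]{\It}) \in \iaccess[\compilesteq{\beta}]{\I}\), and the relational-composition semantics \(\iaccess[\compilesteq{\alpha};\compilesteq{\beta}]{\I} = \iaccess[\compilesteq{\alpha}]{\I} \circ \iaccess[\compilesteq{\beta}]{\I}\) closes the case. Because this length-induction is nested inside the outer structural induction, I expect the cleanest presentation to isolate the decomposition as a separate auxiliary lemma, which is the technically delicate part of the whole argument.
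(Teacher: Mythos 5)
Your proposal is correct and follows essentially the same route as the paper's proof: structural induction on the ST program, using Lemma~\ref{lemma:terms} for the assignment case, Lemma~\ref{lemma:formula} together with the ST IF rules and the union semantics of \(\pchoice{}{}\) for the conditional cases, and relational composition for sequencing. The only difference is one of rigor rather than strategy: where you propose an auxiliary decomposition lemma (proved by induction on the length of the reduction) to extract the intermediate context \(\iget[state]{\Isig}\) in the sequential case, the paper simply asserts that such a \(\iget[state]{\Isig}\) exists ``by operational semantics,'' so your extra care there is a legitimate tightening of the same argument, not a departure from it.
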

\begin{proof}See Appendix~\ref{sec:proofs}.\end{proof}

\begin{lemma}[Correctness of HP to ST compilation]\label{lemma:hp-to-st}
All states reachable with the ST control program are also reachable by the source hybrid program:
If \((\stprg{\compilehpeq{\alpha}},\iget[state]{\I}) \stprgeval (\stprg{\stskip},\iget[state]{\It})\) then \((\iget[state]{\I},\iget[state]{\It}) \in \iaccess[\alpha]{\I}\) for all \(\iget[state]{\I},\iget[state]{\It}\).

\end{lemma}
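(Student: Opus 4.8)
The plan is to prove the statement by structural induction on the hybrid program $\alpha$, following the PLC-translatable grammar of hybrid programs. This mirrors the proof of Lemma~\ref{lemma:st-to-hp}, but now the hybrid program $\alpha$---rather than the ST statement---drives the induction, and the semantic bridge runs the other way: I start from a run of the compiled ST program $\compilehpeq{\alpha}$ and must exhibit a matching run of $\alpha$ in the denotational semantics. The guiding observation is that $\compilehpeq{\alpha}$ is deterministic while $\alpha$ is at worst more permissive through its guarded nondeterministic choices, so the single state the ST program reaches always lies within the reachability relation $\iaccess[\alpha]{\I}$. Throughout I would use Lemma~\ref{lemma:terms} to transfer term evaluations and Lemma~\ref{lemma:formula} to transfer the truth of guards between the ST and \dL semantics, and I would exploit that ST term and formula evaluation are deterministic and total.

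For the base case of assignment, where $\compilehpeq{\humod{x}{\theta}} \compileop \compilehpeq{x} := \compilehpeq{\theta}$, an ST run $(\compilehpeq{x} := \compilehpeq{\theta}, \iget[state]{\I}) \stprgeval (\stskip, \iget[state]{\It})$ ends in $\iget[state]{\It}$ equal to $\iget[state]{\I}$ updated at $x$ by the value $c$ with $(\compilehpeq{\theta},\iget[state]{\I}) \stopeval c$. Since \dL term valuation is total, I set $c_0 = \ivaluation{\I}{\theta}$; Lemma~\ref{lemma:terms} yields $(\compilehpeq{\theta},\iget[state]{\I}) \stopeval c_0$, and determinism of ST term evaluation forces $c = c_0 = \ivaluation{\I}{\theta}$, so $\iget[state]{\It}$ is exactly the successor prescribed by the \dL assignment semantics and $(\iget[state]{\I},\iget[state]{\It}) \in \iaccess[\humod{x}{\theta}]{\I}$. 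For sequential composition, where $\compilehpeq{\alpha;\beta} \compileop \compilehpeq{\alpha};\compilehpeq{\beta}$, I decompose the ST execution to \stskip into a first segment $(\compilehpeq{\alpha},\iget[state]{\I}) \stprgeval (\stskip, \iget[state]{\Isig})$ reaching an intermediate context $\iget[state]{\Isig}$, followed by $(\compilehpeq{\beta},\iget[state]{\Isig}) \stprgeval (\stskip, \iget[state]{\It})$. The induction hypotheses give $(\iget[state]{\I},\iget[state]{\Isig}) \in \iaccess[\alpha]{\I}$ and $(\iget[state]{\Isig},\iget[state]{\It}) \in \iaccess[\beta]{\I}$, and $\iaccess[\alpha;\beta]{\I} = \iaccess[\alpha]{\I} \circ \iaccess[\beta]{\I}$ closes the case.

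For the three conditional forms the compilation targets an ST if-(then)-(else), so I case split on whether the compiled guard evaluates to true, i.e.\ whether $(\compilehpeq{\phi},\iget[state]{\I}) \stopeval \top$; this split is exhaustive because ST Boolean evaluation is total. Recalling that a choice satisfies $\iaccess[\pchoice{\alpha}{\beta}]{\I} = \iaccess[\alpha]{\I} \cup \iaccess[\beta]{\I}$, membership in either branch's relation suffices. If the guard is true, the then-branch runs, the induction hypothesis on $\alpha$ gives $(\iget[state]{\I},\iget[state]{\It}) \in \iaccess[\alpha]{\I}$, Lemma~\ref{lemma:formula} gives $\iget[state]{\I} \models \phi$ so $\ptest{\phi}$ passes, and hence $(\iget[state]{\I},\iget[state]{\It}) \in \iaccess[\ptest{\phi};\alpha]{\I}$, which lies in the choice's relation. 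If the guard is false, the treatment depends on the form: for the default choice $\pchoice{(\ptest{\phi};\alpha)}{\beta}$ the else-branch runs $\compilehpeq{\beta}$ and, crucially, $\beta$ carries no test in \dL, so the induction hypothesis on $\beta$ alone places $(\iget[state]{\I},\iget[state]{\It})$ in $\iaccess[\beta]{\I}$; for the if-then-else $\pchoice{(\ptest{\phi};\alpha)}{(\ptest{\lnot\phi};\beta)}$ I additionally invoke Lemma~\ref{lemma:formula} to obtain $\iget[state]{\I} \models \lnot\phi$, validating the guard $\ptest{\lnot\phi}$; and for the if-then $\pchoice{(\ptest{\phi};\alpha)}{\ptest{\lnot\phi}}$ the ST statement leaves the state unchanged, so $\iget[state]{\It} = \iget[state]{\I}$, while $\iget[state]{\I}\models\lnot\phi$ makes $\ptest{\lnot\phi}$ a valid stuttering run with $(\iget[state]{\I},\iget[state]{\I}) \in \iaccess[\ptest{\lnot\phi}]{\I}$.

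The principal difficulty I anticipate is bridging the operational (small-step) semantics of ST with the denotational (relational) semantics of \dL. Concretely, the sequential-composition case requires that a run of $\compilehpeq{\alpha};\compilehpeq{\beta}$ to \stskip genuinely factor through a well-defined intermediate context $\iget[state]{\Isig}$, which must be argued from the ST transition rules rather than assumed. The assignment case requires turning the one-directional implication of Lemma~\ref{lemma:terms} into the equality $\ivaluation{\I}{\theta}=c$, which is exactly where determinism of term evaluation is essential. The conditional cases are sound precisely because \dL's guarded choices are at least as permissive as ST's deterministic branch resolution, so the only real care needed there is to confirm, via Lemma~\ref{lemma:formula}, that the negated guard of a \dL else-branch is satisfied exactly when the ST guard evaluates to false.
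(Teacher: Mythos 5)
Your proposal is correct and follows essentially the same route as the paper's proof: structural induction on the hybrid program, using Lemma~\ref{lemma:terms} for the assignment case, Lemma~\ref{lemma:formula} to transfer guard truth, factoring sequential composition through an intermediate context, and a case split on the guard for each of the three conditional forms with exactly the same branch-by-branch treatment. The points you flag as requiring care (determinism of ST term evaluation, existence of the intermediate context) are handled the same way, if a bit more tersely, in the paper.
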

\begin{proof}See Appendix~\ref{sec:proofs}.\end{proof}

\subsection{Preserving Safety Guarantees across Compilation}

Correct compilation guarantees that safety properties verified for hybrid programs in scan cycle normal form shape are preserved for the runs of translated ST programs.
Def.~\ref{def:stprogramrun} expresses how a loop-free ST program is executed repeatedly in the scan cycle of a PLC, connected to inputs, and drives the plant through its results.

\begin{definition}[Run of ST program]\label{def:stprogramrun}
A sequence of states\\ \(\iget[state]{\Ist}_0,\iget[state]{\Ist}_1,\iget[state]{\Ist}_2,\ldots,\iget[state]{\Ist}_n\) is a run of ST program \(\stprogrameq{1}\) with input (variable vector) $i$ and plant \(\plant\) with scan cycle duration \(\scduration\) iff for all \(i{<}n\) the program executes to completion \((\stprg{\stprogrameq{1}},\iget[state]{\Isig}_i) \stprgeval (\stprg{\stskip},\iget[state]{\I}_i)\) for some program start state \(\iget[state]{\Isig}_i\) obtained from the previous state \(\iget[state]{\Ist}_i\) in the run by reading input s.t. \((\iget[state]{\Ist}_i,\iget[state]{\Isig}_i) \in \iaccess[\extinput]{\I}\) and some program result state \(\iget[state]{\I}_i\) driving the plant to the next state \(\iget[state]{\Ist}_{i+1}\) in a continuous transition of duration $t{\leq}\scduration$ s.t. \((\iget[state]{\I}_i,\iget[state]{\Ist}_{i+1}) \in \iaccess[\plant]{\I}\).
\end{definition}

Def.~\ref{def:stprogramrun} expresses how a ST program interacts with the physical world; Def.\ref{def:scancyclenormalform} says that a hybrid program in scan cycle normal form, \(\hpscancycle\), is safe if it reaches only safe states in which \(S\) is true when started in states where \(A\) is true.
We now translate safety to the compiled ST program.
Intuitively, a hybrid program is compiled safe to ST when any ST program run that starts in a state matching the assumptions \(A\) reaches only states where running the plant is safe \(S\), as expressed in Theorem~\ref{theorem:safety}.

\begin{theorem}[Compilation Safety]\label{theorem:safety}
If the \dl formula
\[A \limply \dbox{\prepeat{(\hpscancycle)}}S\] 
is valid, and a run \(\iget[state]{\Ist}_0,\iget[state]{\Ist}_1,\iget[state]{\Ist}_2,\ldots,\iget[state]{\Ist}_n\) of \(\compilehpeq{\ctrl}\) with input $i$ and plant \(\plant\) starts with satisfied assumptions \(\iget[state]{\Ist}_0 \models A\), then \(\iget[state]{\Ist}_i \models S\) for all \(i\).
\end{theorem}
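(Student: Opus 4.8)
The plan is to prove, by induction on the scan-cycle index \(i\), that every run state \(\iget[state]{\Ist}_i\) is reachable from the initial state by the loop, i.e. \((\iget[state]{\Ist}_0,\iget[state]{\Ist}_i) \in \iaccess[\prepeat{(\hpscancycle)}]{\I}\). Once this reachability is established, the rest is immediate: validity of \(A \limply \dbox{\prepeat{(\hpscancycle)}}S\) together with the hypothesis \(\iget[state]{\Ist}_0 \models A\) yields \(\iget[state]{\Ist}_0 \models \dbox{\prepeat{(\hpscancycle)}}S\), and the box modality asserts precisely that every state reachable from \(\iget[state]{\Ist}_0\) through \(\prepeat{(\hpscancycle)}\) satisfies \(S\). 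Instantiating this at \(\iget[state]{\Ist}_i\) delivers the theorem. Note that only the HP-to-ST direction (Lemma~\ref{lemma:hp-to-st}) is needed here; Lemma~\ref{lemma:st-to-hp} plays no role.

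For the induction, the base case \(i=0\) is immediate because \(\iaccess[\prepeat{(\hpscancycle)}]{\I}\) is reflexive (zero loop iterations), so \((\iget[state]{\Ist}_0,\iget[state]{\Ist}_0) \in \iaccess[\prepeat{(\hpscancycle)}]{\I}\). For the inductive step at an index \(i<n\), I would unfold Def.~\ref{def:stprogramrun}, which supplies three pieces: an input transition \((\iget[state]{\Ist}_i,\iget[state]{\Isig}_i) \in \iaccess[\extinput]{\I}\), a completed ST run of the compiled controller \((\compilehpeq{\ctrl},\iget[state]{\Isig}_i) \stprgeval (\stskip,\iget[state]{\I}_i)\), and a plant transition \((\iget[state]{\I}_i,\iget[state]{\Ist}_{i+1}) \in \iaccess[\plant]{\I}\). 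The crucial move is to replace the operational ST run by a \dL transition of the source controller: applying Lemma~\ref{lemma:hp-to-st} with \(\alpha = \ctrl\) turns the middle step into \((\iget[state]{\Isig}_i,\iget[state]{\I}_i) \in \iaccess[\ctrl]{\I}\). Composing the three relations through the denotational semantics of sequential composition then gives a single honest scan cycle \((\iget[state]{\Ist}_i,\iget[state]{\Ist}_{i+1}) \in \iaccess[\hpscancycle]{\I}\), and appending it to the induction hypothesis via the transitive-closure semantics of the repetition operator yields \((\iget[state]{\Ist}_0,\iget[state]{\Ist}_{i+1}) \in \iaccess[\prepeat{(\hpscancycle)}]{\I}\).

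The hard part is the semantic bookkeeping hidden in that inductive step: the run of Def.~\ref{def:stprogramrun} is a hybrid object mixing the operational ST semantics of the compiled controller with the denotational \dL reachability of input and plant, so the argument must glue these into one clean iteration of the scan-cycle normal form. Lemma~\ref{lemma:hp-to-st} carries the controller, but I must check carefully that the plumbing lines up: the program result state \(\iget[state]{\I}_i\) is exactly the state fed to the plant, and the plant's end state \(\iget[state]{\Ist}_{i+1}\) is exactly the start state of the next input read, so consecutive cycles compose with no gap. I also need the ST contexts and \dL states to range over the same value space, which rests on the standing assumption \(\texttt{LReal}=\mathbb{R}\) already used in Lemma~\ref{lemma:terms} and Lemma~\ref{lemma:formula}. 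With the one-cycle reachability secured, everything else is a routine appeal to the loop and composition semantics plus the validity hypothesis.
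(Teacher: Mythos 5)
Your proposal is correct and follows essentially the same route as the paper: apply Lemma~\ref{lemma:hp-to-st} to replace the operational ST execution of \(\compilehpeq{\ctrl}\) by a \dL transition of \(\ctrl\), glue it to the input and plant transitions via the semantics of sequential composition to obtain one scan-cycle iteration \((\iget[state]{\Ist}_i,\iget[state]{\Ist}_{i+1})\), and then invoke the validity of the boxed loop formula. The only difference is that you make explicit, as an induction on \(i\), the step from per-cycle reachability to reachability under \(\prepeat{}\), which the paper leaves implicit in its final appeal to validity.
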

\begin{proof}
By Lemma \ref{lemma:hp-to-st}: if \((\stprg{\compilehpeq{\ctrl}},\iget[state]{\Isig}_i) \stprgeval (\stprg{\stskip},\iget[state]{\I})\) then \((\iget[state]{\Isig}_i,\iget[state]{\I}) \in \iaccess[\ctrl]{\I}\).
Since \(\iget[state]{\Ist}_0,\ldots,\iget[state]{\Ist}_n\) is a run of \(\compilehpeq{\ctrl}\) in input $i$ and plant \(\plant\), we have for all \(i{<}n\) that \((\iget[state]{\Ist}_i,\iget[state]{\Isig}_i) \in \iaccess[\extinput]{\Ist}\), \((\iget[state]{\Isig}_i,\iget[state]{\I}_i) \in \iaccess[\ctrl]{\Isig}\), and \[(\iget[state]{\I},\iget[state]{\Ist}_{i+1}) \in \iaccess[\plant]{\I}\enspace .\]
Thus, by the semantics of sequential composition \cite{DBLP:journals/jar/Platzer17}, 
\begin{multline*}
(\iget[state]{\Ist}_i,\iget[state]{\Ist}_{i+1}) \in\\ \iaccess[\hpscancycle]{\Ist}
\end{multline*}
for all \(i{<}n\).
Hence, we conclude \(\iget[state]{\Ist}_i \models S\) for all \(i\) by the validity of \[A \limply \dbox{\prepeat{(\hpscancycle)}}S
\qedhere\]
\end{proof}

Theorem~\ref{theorem:safety} means that an ST program enjoys the safety proof of a hybrid program if our compilation was used in the process (either the hybrid program used in the proof was compiled from the ST program, or the hybrid program was the source for compiling the ST program).
Next, we analyze the shape and static semantics of a hybrid program in scan-cycle normal form to extract configuration information.

\subsection{Cyclic Control Configuration}

ST programs are complemented with a configuration that structures the programs into tasks, assigns priorities and execution intervals to these tasks, and allocates computation resources for the tasks.
For a hybrid program in scan-cycle normal form per Def.~\ref{def:scancyclenormalform}
\[
\prepeat{(\hpscancycle)}
\]
do its shape and static semantics provide essential insight into the required configuration information.
The modeling pattern in the scan-cycle normal form is that of a time-triggered repetition, achieved by a clock variable $t$ that is reset to \(0\) before the continuous dynamics, evolves with constant slope \(1\), and allows following the continuous dynamics for up to $\scduration$ time. 
The combined effect is that the input \(\extinput\) and control \(\ctrl\) are executed at least once every \(\scduration\) time. 
In the compilation setup, a value for \(\scduration\) must be provided (e.g., with a formula \(\scduration=n\) as part of the assumptions \(A\) in the safety proof) and is taken as the scan cycle configuration of a PLC.

For a single task, we define the compilation of a safety property of a hybrid program in scan-cycle normal form to a task as:
\begin{multline*}
\compilehpeq{A \limply \dbox{\prepeat{(\hpscancycleabbrv)}}S}
~\compileop \\ \texttt{Task}(\compilehpeq{\ctrl},\scduration),\\
 \text{where}~\plantabbrv \equiv \plant
\end{multline*}
\noindent and \texttt{Task}($\alpha$,$\epsilon$) is a shorthand defining a task\footnote{A task is being used here to abstract the other configuration components of an ST program, i.e., Configurations and Resources. We assume only one configuration and one resource at a time in this paper for a single PLC.} that executes $\alpha$ (here the discrete control $\ctrl$ translated to ST), cyclically with an interval $\scduration$. 
Similarly, we define the converse compilation of a task with an ST program $\alpha$--whose variables $i$ are of type \lstinline{VAR_INPUT} from the configuration--and execution time of $\scduration$ as
\begin{multline*}
\compilesteq{\texttt{Task}(\alpha,\scduration)}~\compileop \\ A \limply \dbox{\prepeat{(\prandom{i};\compilesteq{\alpha};\plantabbrv)}}S\\
 given~\plantabbrv \equiv \plant
\end{multline*}
Since the ST program does not include an analytic plant model, the compiled controller is augmented with the differential equations from a plant given as extra input.
The sets of input and output variables determined by analyzing the static semantics of the hybrid program inform the program configuration variable declaration blocks \lstinline{VAR_INPUT} and \lstinline{VAR_OUTPUT}, as seen in Figure~\ref{fig:swat-simple-st-code}.

\noindent\textbf{Extension to multiple tasks.} A future extension to multiple tasks would consider a single configuration of a PLC with a single resource that has a one-to-one mapping of task configurations to ST programs. A designated clock $t_n$ per task would keep track of the associated task's execution interval $\scduration_n$. 
The task execution intervals would be checked periodically every $\scduration_\text{sc}$ times, which represents the scan cycle timing of the PLC. 
Any task with elapsed clock $t_n \geq \scduration_n\) is executed (which means that tasks are executed with at most $\scduration_\text{sc}$ delay).


\section{Evaluation}\label{sec:evaluation}

Now that we have provided the compilation rules that will be used by \name, we evaluate the tool on a real system. \name was implemented as two module extensions for the \kyx{} tool: one for each compilation direction. For the compilation of hybrid programs to ST, the compilation rules were implemented on top of the existing parser of the \kyx{} tool. Given the abstract syntax tree of a hybrid program, \name generates the associated ST code based on the compilation rules. 
The implementation was written in Scala with $\sim$700 LoC. 

Similarly, the module for the compilation of an ST program to a hybrid program was implemented on top of the lexical analysis provided by the MATIEC IEC 61131-3 compiler~\cite{sousamatiec}. The MATIEC compiler provides modules that compile ST programs to either C code or other languages provided by the IEC standard. With the same APIs we implemented the compilation rules from the abstract syntax tree of an ST program. The module was implemented in \texttt{C++} with $\sim$1000 LoC. 

We next present how \name was evaluated against the water treatment testbed.
\subsection{Use Case: Water Treatment Testbed}
In the case study, we first compiled the PLC code from the water treatment testbed shown in Figure~\ref{fig:swat-simple-st-code} into a hybrid program. 
Formal verification in \KeYmaeraX showed that this implementation is unsafe. 
We then updated the generated hybrid program with the necessary assumptions to guarantee the safety of the ICS. Finally, we compiled the fixed hybrid program into PLC code that, by Theorem~\ref{theorem:safety}, enjoys the safety proof of the hybrid program.

\subsubsection{Counterexamples in Existing PLC Code}
In order to compile the ST controller into a hybrid program of the water treatment testbed, we provide the continuous plant of the ICS in terms of differential equations, as well as the initial state constraints $A$. These are combined with the compiled $\text{ctrl}$ of the ICS that provides the discrete-state transitions of the system. Finally, we define the safety requirement, $S$, that ensures that the water tank levels always remain within their upper ($H_H$) and lower ($L_L$) thresholds.

\begin{figure}[th]
\centering
\small
\boxalign{\begin{align*}
A &\limply [\prepeat{\{\text{in};\text{ctrl};\humod{t}{0};\{\text{plant}~\&~\ivr\}\}}]S\\
A &\equiv  L_1\leq x_1 \land x_1\leq H_1 \land L_2 \leq x_2 \land x_2 \leq H_2\\
 &\phantom{\equiv~} \land V_1=0 \land V_2=0 \land P=0\\
 &\phantom{\equiv~} \land \scduration\geq 0 \land F_L > 0 \land L_L < L_1 \land L_L < L_2 \\
 &\phantom{\equiv~} \land L_1 < H_1 \land L_2 < H_2 \land H_1 < H_H \land H_2 < H_H\\
\text{in} & \equiv \prandom{f_1};~ \prandom{f_2}\\
\text{ctrl} &\equiv\, \{\phantom{\cup\,} \ptest{\phantom{\lnot}(x_1 \geq H_1)};~ \humod{V_1}{0}\\ 
     &\phantom{\equiv~\{}\cup\, \ptest{\lnot(x_1 \geq H_1)};~ \{ \phantom{\cup\,} \ptest{\phantom{\lnot}(x_1 \leq L_1)}; \humod{V_1}{1}\\
     &\phantom{\equiv~\{\cup\, \ptest{\lnot(x_1 \geq H_1)};~ \{} \cup\, \ptest{\lnot(x_1 \leq L_1 )}\}\\
     &\phantom{\equiv~} \};\\
     &\phantom{\equiv~} \{\phantom{\cup\,}\ptest{\phantom{\lnot}(x_2 \leq L_2)};~ \humod{P}{1};~ \humod{V_2}{1}\\
     &\phantom{\equiv~\{}\cup\, \ptest{\lnot(x_1 \leq L_2)}\};\\
     &\phantom{\equiv~} \{ \phantom{\cup\,} \ptest{(x_1 < L_L \lor f_2 \leq F_L \lor x_2 > H_2)};~ \humod{P}{0};~ \humod{V_2}{0}\\
     &\phantom{\equiv~\{} \cup\, \ptest{\lnot(x_1 < L_L \lor f_2 \leq F_L \lor x_2 > H_2)}\}\\
\text{plant} &\equiv \D{x}_1=V_1 \cdot f_1 - V_2 \cdot P \cdot f_2\syssep~ \D{x}_2=V_2 \cdot P \cdot f_2 \syssep \D{t}=1 \\
\ivr &\equiv t \leq \scduration \land x_1 \geq 0 \land x_2 \geq 0 \land f_1\geq0 \land f_2 \geq 0\\
S &\equiv L_L \leq x_1 \land x_1 \leq H_H \land L_L \leq x_2 \land x_2 \leq H_H
\end{align*}}
\caption[Hybrid Program Translation of Original Water Treatment Control]{Hybrid program generated by \name. This is a compilation of the PLC code in Figure~\ref{fig:swat-simple-st-code}}
\label{fig:generated-hp-program}
\end{figure}

Figure~\ref{fig:generated-hp-program} shows the full hybrid program generated by \name that incorporates both the compiled ST code as well as the continuous dynamics of the water treatment testbed. 
Intuitively, this model cannot be proven as there are no constraints on the flow rates $f_1$ and $f_2$, nor do the guards on actuation enforce such constraints. 
We use \KeYmaeraX and the \dL proof calculus to find counterexamples for the faulty combinations of operating the valves \(V_1\) and \(V_2\), both for concrete threshold values \cite{gohAdepuJunejoMathur} and the generalized threshold conditions \(L_L < L_1 < H_1 < H_H \land L_L < L_2 < H_2 < H_H\) of Figure~\ref{fig:generated-hp-program}. 
Some representative examples are listed below:
\begin{itemize}
\item If \(x_1 \geq H_1\) (so \(V_1=0\)) and \(x_2 \leq H_2\) (so \(V_2=1\)): without time and flow rate bounds, the pump may drain the first tank when it attempts to protect underflow in the second tank; it may also cause overflow of the second tank.
\item If only \(V_1=1\) is open, the first tank may overflow.
\item If both valves are open, either tank may overflow, or the first tank may underflow, depending on the ratio of flow rates.
\end{itemize}
\KeYmaeraX finds such counterexamples by unrolling the loop and analyzing paths through the loop body to (i) collect assumptions (e.g., conditions in tests \(x_1 \geq H_1\), and effects of assignments \(V_1=1\) from \(\humod{V_1}{1}\)) and (ii) propagate program effects into proof obligations (e.g., the effect of the flow rate and valves on the water level \(\D{x}_1=V_1 \cdot f_1-V_2 \cdot P \cdot f_2\) is propagated into \(S\)).
A counterexample consists of sample values for the variables such that the collected assumptions are satisfied but the proof obligations are not. 
Analyzing these sample values point to potential fixes (e.g., no flow into the first tank \(f_1=0\) with simultaneous large out flow \(f_2\) indicates that the valve \(V_2\) must be turned off before the first tank drains entirely).

\begin{figure*}
\centering
\boxalign{\begin{align*}
A &\limply [\prepeat{\{\text{in};\pmb{\text{ctrl}};\humod{t}{0};\{\text{plant}~\&~\ivr\}\}}]S\\
\text{ctrl} &\equiv\, \{\phantom{\cup\,} \ptest{\phantom{\lnot}(\pmb{f_1 > (H_H-x_1)/\scduration})};~ \humod{V_1}{0}\\ 
     &\phantom{\equiv~\{}\cup\, \ptest{\lnot(\pmb{f_1 > (H_H-x_1)/\scduration})};~ \{ \ptest{(x_1 \leq L_1)}; \humod{V_1}{1} ~\cup~ \ptest{\lnot(x_1 \leq L_1 )}\}~\}\\
     &\phantom{\equiv~} \{~\ptest{(x_2 \leq L_2)};~ \humod{P}{1};~ \humod{V_2}{1} ~\cup~ \ptest{\lnot(x_1 \leq L_2)}~\}\\
     &\phantom{\equiv~} \{ \phantom{\cup\,} \ptest{\phantom{\lnot}(\pmb{V_1 \cdot f_1-V_2 \cdot P \cdot f_2 < (L_L-x_1)/\scduration} \lor f_2 \leq FL \lor \pmb{V_2 \cdot P \cdot f_2 > (H_H-x_2)/\scduration})};~ \humod{P}{0};~ \humod{V_2}{0}\\
     &\phantom{\equiv~\{} \cup\, \ptest{\lnot(\pmb{V_1 \cdot f_1-V_2 \cdot P \cdot f_2 < (L_L-x_1)/\scduration} \lor f_2 \leq FL \lor \pmb{V_2 \cdot P \cdot f_2 > (H_H-x_2)/\scduration})}~\}
\end{align*}}
\caption{Safe controller with mixed decision conditions for valve and pump actuation based on flow rate and empirical thresholds (replaces the controller of Figure~\ref{fig:generated-hp-program}, only the control decisions that exposed counterexamples in \KeYmaeraX are changed; changes are highlighted in boldface)}
\label{fig:safe-hp-program}
\end{figure*}

\begin{figure}
\centering
\begin{lstlisting}[linewidth=\columnwidth,
%basicstyle=\footnotesize,
language=ST]
  IF (f1 > (HH-x1)/$\scduration$) THEN V1:=0;
  ELSE 
    IF (x1 <= L1) THEN V1:=1; END_IF;
  END_IF;
  
  IF (x2 <= L2) THEN P:=1; V2:=1; END_IF;
  
  IF (V1*f1-V2*P*f2 < (LL-x1)/$\scduration$ OR f2<=FL OR V2*P*f2 > (HH-x2)/$\scduration$) THEN 
    P:=0; V2:=0; 
  END_IF;
\end{lstlisting}
\caption{ST code fragment compiled from safe \(\text{ctrl}\) (see Figure~\ref{fig:safe-hp-program}). The variable \(\scduration\) is a placeholder for the concrete task interval time}
\label{fig:safe-st-program}
\end{figure}

\subsubsection{Generating Safe PLC Code}
The hybrid program was updated to reflect a safe system that restricts the flow rates by modifying the guard values on the discrete control. Figure~\ref{fig:safe-hp-program} shows the updated hybrid program that was proven to be safe with \kyx. 
Once verified, \name generates the associated PLC code, listed in Figure~\ref{fig:safe-st-program}.

\begin{figure*}[tbp]
	\centering
    \begin{subfigure}[b]{0.42\textwidth}
    	\centering
    	\includegraphics[width=1\textwidth]{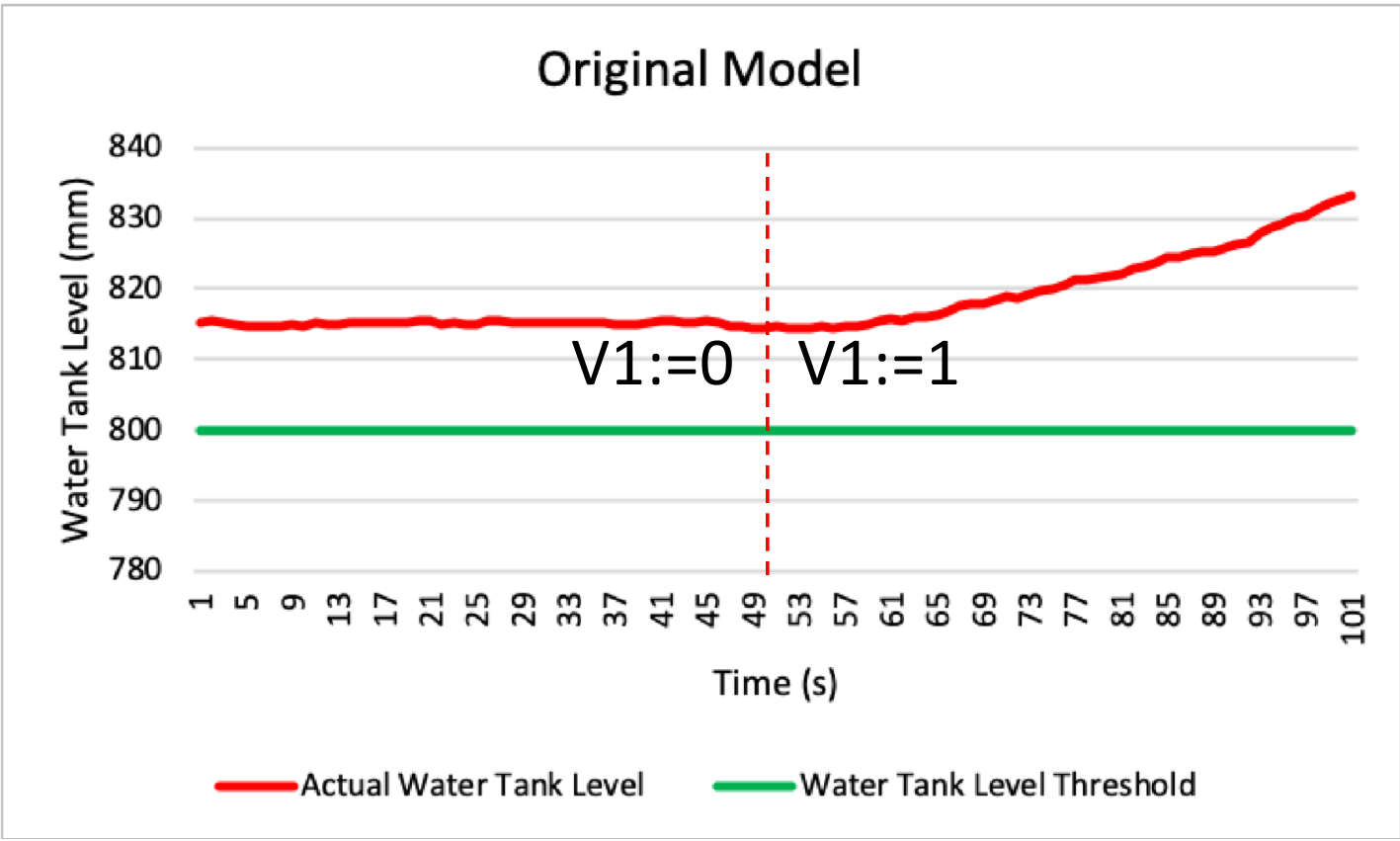}
        \caption{Original Model}
	\end{subfigure}
~
	\begin{subfigure}[b]{0.42\textwidth}
    	\centering
    	\includegraphics[width=1\textwidth]{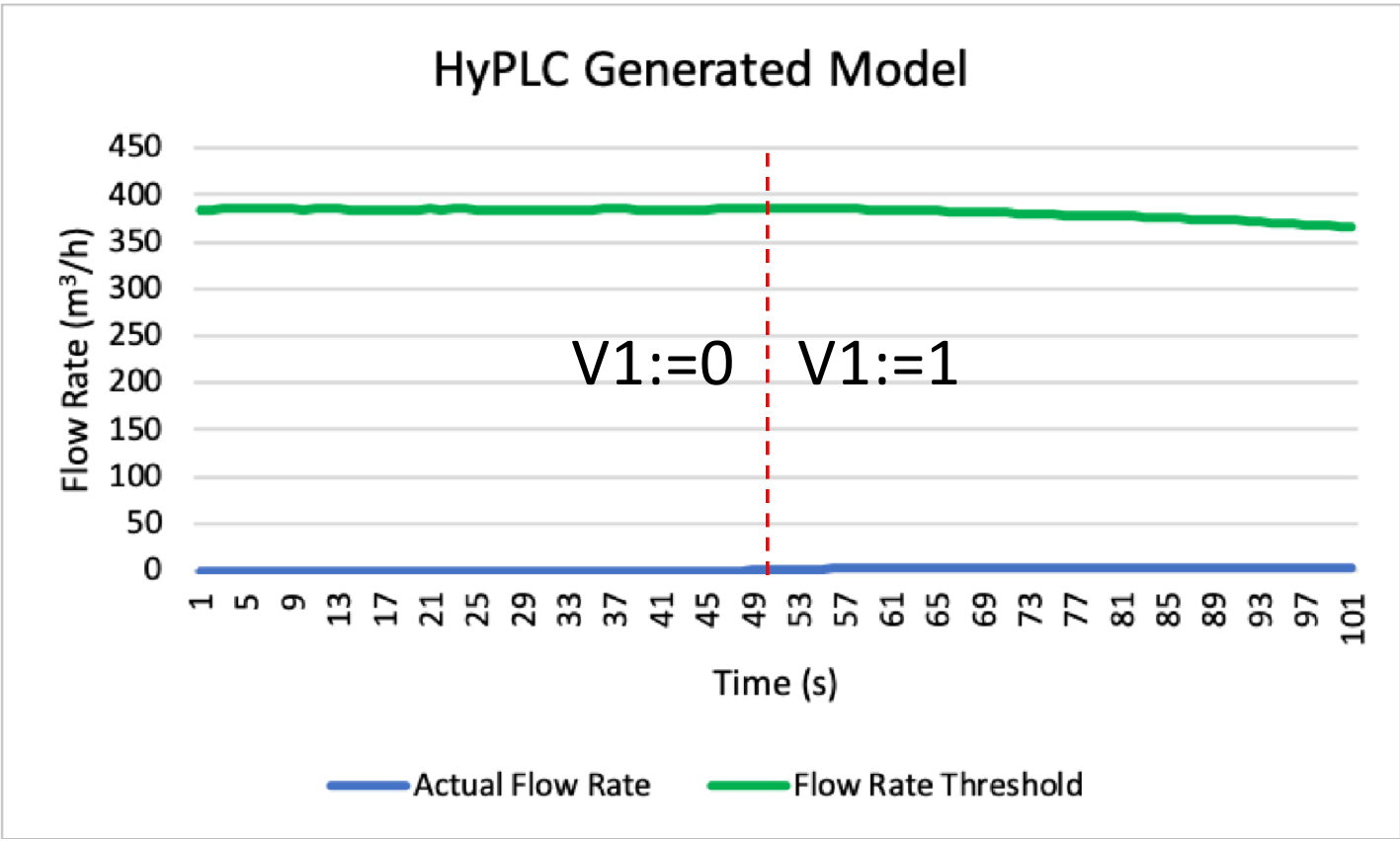}
        \vspace{-0.15in}
        \caption{\name Generated Model}
	\end{subfigure}
	\caption{Comparison of original model versus \name generated model on the same snippet of real data. The dashed red line shows when valve V1 is opened. For the original model, a flag would be raised as it violates the associated incorrect safety constraints--as indicated by the solid red line. However, the \name generated model would not have raised a false safety violation flag--as indicated by the blue line.\andre{different colors and quantities confusing. Plot all?}\luis{I have tried to plot all in the same figure but it doesn't visualize well. It doesn't help that the actual flow rate is so much smaller than the threshold.}}
	\label{fig:comparison-real-data}
\end{figure*}

\noindent\textbf{Comparison on real-world data.} To illustrate the safety guarantees of our system, we developed a Python script to analyze the sensor and actuation values of 4 days worth of sensor data~\cite{gohAdepuJunejoMathur}. 
We check the values of the sensor data relevant to the process described by our model and instantiate the parameters in the model with the values provided in the dataset. 
At each time sample, the script checks that the collective system state complies\footnote{The relevant conditions to check and expected control choices can be extracted by proof from a hybrid program using ModelPlex~\cite{DBLP:journals/fmsd/MitschP16}.} with the expected test-actuation sequences enumerated in our model: the recorded actuator commands for the valves and pump must match the expected command from our model, which is determined by matching the recorded sensor values with the test conditions in the model.
For instance, the script records a violation if the condition in
\lstinline[language=ST]{IF (f1 > (HH-x1)/$\scduration$) THEN V1:=0;} 
from Figure~\ref{fig:safe-st-program} is met but the recorded actuation differs from closing the valve.
We compared the violations with the original ST program in Figure~\ref{fig:swat-simple-st-code} where, e.g., the corresponding condition reads \lstinline[language=ST]{IF (x1 >= H1) THEN V1:=0;}. Figure~\ref{fig:comparison-real-data} illustrates this difference on a snippet of the real data. In particular, this snippet represents a period of manual operation. The original model will have raised an incorrect safety violation flag, while the model generated by \name will not raise a flag during this period.

Our results revealed that the recorded data did not comply with Figure~\ref{fig:safe-st-program} for 238 instances\footnote{An instance of a model compliance violation is a range of uninterrupted scan cycles where the recorded data deviates from the expected model.} for the verified code in Figure~\ref{fig:safe-st-program} and 439 instances for the original code in Figure~\ref{fig:swat-simple-st-code} out of ~40K possible instances\footnote{For 403K samples, the duration of each instance was on average 10 scan cycles.}.  Note that the verified code  allows the system to operate closer to its limits for reasons detailed below, providing a more efficient system operation while enjoying the safety guarantees of the proofs in \KeYmaeraX. 

Upon inspection, most of the violations observed occur during initialization and at the thresholds in oscillating normal system operation~\cite{gohAdepuJunejoMathur}. 
For example, during initialization, the data shows a period where valve $V_1$ is closed and the tank is drained despite not having reached the lower threshold \(L_1\), see \cite[Fig. 4b]{gohAdepuJunejoMathur}. 
During normal operation, the system slightly overshoots or undershoots the intended limits for discrete switching states, e.g., if the system was supposed to close $V_1$ when $x_1 \leq L_1$, the system may undershoot $L_1$.
These slight overshoots or undershoots are not allowed in the original ST code, but can be tolerated in the verified model that takes into account flow rates for making decisions.

This study allowed us to not only generate safe PLC code, but to also reveal missing conditions in PLC code that has been evaluated empirically to be safe. We further showed that \name may provide a means of operating a system closer to safety limits while at the same time \emph{provably maintaining crucial safety guarantees}.

\section{Conclusion and Future Work}\label{sec:conclusion}
In this paper, we formalize compilation between safety-critical code utilized in industrial control systems (ICS) and the discrete control of hybrid programs specified in differential dynamic logic (\dl{}). We present \name, a tool for bi-directional compilation of code loaded onto programmable logic controllers (PLCs) to and from hybrid programs specified in \dl{} to provide safety guarantees for ``deep'' correctness properties of the PLC code in the context of the cyber-physical ICS. We evaluated \name on a real water treatment testbed, demonstrating how \name can be utilized to both verify the safety of existing PLC code as well as generate correct PLC code given a verified hybrid program. Future work will focus on lifting assumptions for PLC arithmetic, support for multiple tasks, as well as support for security analysis. This work serves as a foundation for pragmatic verification of PLC code as well as to understand the safety implications of a particular implementation given complex cyber-physical interdependencies.
\section*{Acknowledgment}
We would like to thank the U.S. Department of Education (Graduate Assistance in Areas of National Need), the U.S. Department of Energy (Award Number DE-OE0000780), the Air Force Office of Scientific Research (Award Number FA9550-16-1-0288), as well as the Defense Advanced Research Projects Agency (Award Number FA8750-18-C-0092) for their support of this work. The views and conclusions contained in this document are those of the authors and should not be interpreted as representing the official policies, either expressed or implied, of the U.S. Department of Education, the U.S. Department of Energy, AFOSR, DARPA, or the U.S. Government. The U.S. Government is authorized to reproduce and distribute reprints for Government purposes
notwithstanding any copyright notation here on.
\input

\bibliographystyle{IEEEtran}
\bibliography{references}

\pagebreak

\appendix
\section{Semantics of Differential Dynamic Logic}
\label{sec:dlsemantics}

The semantics of \dL~\cite{DBLP:journals/jar/Platzer08,DBLP:conf/lics/Platzer12a,DBLP:journals/jar/Platzer17} is a Kripke semantics in which the states of the Kripke model are the states of the hybrid system.
Let \(\reals\) denote the set of real numbers and \(\allvars\) denote the set of variables.
A state is a map~\(\iget[state]{\I}: \allvars\to\reals\) assigning a real value \(\iget[state]{\I}(x)\) to each variable \(x\in\allvars\).
We write~\m{\imodels{\I}{\phi}} if formula \(\phi\) is true at \(\iname[state]{\I}~\iget[state]{\I}\)~(Def.~\ref{def:valuation}).
The real value of term \(\theta\) at \(\iget[state]{\I}\) is denoted \(\ivaluation{\I}{\theta}\).

The semantics of translatable hybrid programs \(\alpha\) is expressed as a transition relation between states (Def.~\ref{def:valuationProgram}).

\begin{definition}[Transition semantics of translatable hybrid programs] \label{def:valuationProgram}
The transition relation \(\iaccess[\alpha]{\I}\) specifies which \(\iname[state]{\I}\)s \(\wt\) are reachable from a \(\iname[state]{\I}\) \(\ws\) by operations of \(\alpha\).
It is defined as follows:
    \begin{enumerate}
    \item
      \(\relateds{\iaccess[\pupdate{\umod{x}{\theta}}]{\I}}{\ws}{\wt}\)
      iff \(\wt(x)=\ivaluation{\I}{\theta}\), and for all other variables \(z \neq x\), \(\wt(z) = \ws(z)\)
    \item
      \(\relateds{\iaccess[\ptest{\ivr}]{\I}}{\ws}{\wt}\)
      iff~$\iget[state]{\I}=\wt$ and~\m{\imodels{\I}{\ivr}}
    \item \(\iaccess[\pchoice{\alpha}{\beta}]{\I} =
      \iaccess[\alpha]{\I} \cup \iaccess[\beta]{\I}\)
      \index{$\pchoice{}{}$}
    \item \(\iaccess[{\alpha};{\beta}]{\I} =
      \{(\ws,\wt) ~|~ \textrm{exists}~\mu~\textrm{s.t.}~ \related{\iaccess[\alpha]{\I}}{\ws}{\mu} ~\textrm{and}~ \related{\iaccess[\beta]{\I}}{\mu}{\wt}\}\)
    \end{enumerate}
\end{definition}

{%
\begin{definition}[Interpretation of translatable \dL formulas] \label{def:valuation}
  Truth of translatable \dL formula \(\phi\) in \(\iname[state]{\I}~\iget[state]{\I}\), written \(\imodels{\I}{\phi}\), is defined as follows:
  \begin{enumerate}
  \item \(\imodels{\I}{\theta_1\sim\theta_2}\)
    iff \(\ivaluation{\I}{\theta_1} \sim \ivaluation{\I}{\theta_2}\)
    for ${\sim} \in \{=,\leq,<,\geq,>\}$
  \item \(\imodels{\I}{\phi \land \psi}\) iff
    \(\imodels{\I}{\phi}\) and \(\imodels{\I}{\psi}\),
    so on for $\lnot,\lor,\limply,\lbisubjunct$
  \item
    \(\imodels{\I}{\dbox{\alpha}{\phi}}\)
      iff
      \(\imodels{\It}{\phi}\)
      for all \({\wt ~\textrm{with}~
        \related{\iaccess[\alpha]{\I}}{\ws}{\wt}}\)
%
  \end{enumerate}
 We denote \emph{validity} as \(\models \phi\), i.e., \(\imodels{\I}{\phi}\) for all states $\iget[state]{\I}$.
\end{definition}
}

\section{Semantics of Structured Text}
\label{sec:structured-text-semantics}

\begin{figure}[h]
\begin{align}
 \frac{\iget[state]{\I}(x_1) = c_1}{(x_1,\iget[state]{\I})\stopeval c_1} &~\textrm{ST Variable Value}\nonumber\\
 \frac{(e_1,\iget[state]{\I})\stopeval c_1 \qquad (e_2,\iget[state]{\I})\stopeval c_2}{(e_1 \textrm{~\text{OR}~}  e_2,\iget[state]{\I})\stopeval c_1 \lor c_2} &~\textrm{ST OR Expression}\nonumber\\
 \frac{(e_1,\iget[state]{\I})\stopeval c_1 \qquad  (e_2,\iget[state]{\I})\stopeval c_2}{(e_1 \textrm{~\text{AND}~}  e_2,\iget[state]{\I})\stopeval c_1 \land c_2} &~\textrm{ST AND Expression}\nonumber\\
 \frac{(\stprg{s_1},\iget[state]{\I}) \stprgeval (\stprg{\stskip},\iget[state]{\It})}{(\stprg{s_1;s_2}\,,\,\iget[state]{\I}) \stprgeval (\stprg{s_2},\iget[state]{\It})} &~\textrm{ST Sequence}\nonumber\\
 \frac{(e_1,\iget[state]{\I})\stopeval c_1}{(\stprg{x_1:=e_1}\,,\,\iget[state]{\I}) \stprgeval (\stprg{\stskip},\iget[state]{\I}[x_1 \mapsto c_1])} &~\textrm{ST Assignment}\nonumber\\
 \frac{(e_1,\iget[state]{\I}) \stopeval \top}{(\stif{e_1}{s_1}{s_2}\,,\,\iget[state]{\I}) \stprgeval (s_1,\iget[state]{\I})} &~\textrm{ST IF (1)}\nonumber\\
\frac{(e_1,\iget[state]{\I}) \stopeval \bot}{(\stprg{\stif{e_1}{s_1}{s_2}}\,,\,\iget[state]{\I}) \stprgeval (s_2,\iget[state]{\I})} &~\textrm{ST IF (2)}\nonumber\\
 \frac{(e_1,\iget[state]{\I}) \stopeval \top}{(\stifthen{e_1}{s_1}\,,\,\iget[state]{\I}) \stprgeval (s_1,\iget[state]{\I})} &~\textrm{ST IF (3)}\nonumber\\
\frac{(e_1,\iget[state]{\I}) \stopeval \bot}{(\stprg{\stifthen{e_1}{s_1}}\,,\,\iget[state]{\I}) \stprgeval (\stskip,\iget[state]{\I})} &~\textrm{ST IF (4)}\nonumber
\end{align}
\caption{Operational ST semantics based on~\cite{darvas2017plc}.}
\label{fig:st-semantics}
\end{figure}

Figure~\ref{fig:st-semantics} lists the operational ST semantics based on~\cite{darvas2017plc}. 
The context of an ST statement is denoted by $\iget[state]{\I}$, which is a function $\iget[state]{\I}: V\rightarrow D$ that assigns a value from pre-defined domains \(D\) to each defined variable \(x \in V\) (we use \(D=\text{LReal}\)). 
We follow~\cite{darvas2017plc} where a program $s$ is executed from an initial context $\iget[state]{\I}$ and results in a subsequent context $\iget[state]{\It}$ that determines the values of the physical outputs and values of the variables for the subsequent PLC scan cycle. 
Execution of a cycle ends when the final configuration $(\stprg{\stskip}, \iget[state]{\I})$ is reached, where \(\stskip\) denotes the end of the code for this scan cycle. 
Variables are denoted as $x_1$, expressions are denoted as $e_1$ and $e_2$, constants are denoted as $c_1$ and $c_2$, and ST statements are denoted as $s_1$ and $s_2$. 
Arithmetic/logic evaluations are denoted by $\stopeval$ and single-step program evaluations are denoted as $\stprgeval$. 
The context $\iget[state]{\I}[x_1 \mapsto c_1]$ denotes a context \(\iget[state]{\It}\) that agrees with \(\iget[state]{\I}\) except that \(\iget[state]{\It}(x_1) = c_1\). 
Figure~\ref{fig:st-semantics} lists \text{AND} and \text{OR} as representative examples for the other arithmetic/logical operations for two ST terms as defined by the IEC 61131-3 standard.

\section{Proofs}
\label{sec:proofs}

\begin{proof}[Proof of Lemma~\ref{lemma:terms}]
Straightforward structural induction from 
\begin{itemize}
\item Number literals \((n,\iget[state]{\I}) \stopeval n\) for all \(\iget[state]{\I}\) and \(\ivaluation{}{n}=n\)
\item ST Variable Value \(\frac{\iget[state]{\I}(x)=c}{(x,\iget[state]{\I}) \stopeval c}\) and \dL variable valuation \(\ivaluation{\I}{x} = \iget[state]{\I}(x)\)
\item Negation in ST
\(\frac{(\theta,\iget[state]{\I})=c_1}{(-\theta,\iget[state]{\I})\stopeval -c_1}\) and \dL \(\ivaluation{\I}{-\theta}=-\ivaluation{\I}{\theta}\) for term $\theta$
\item Binary arithmetic operator $\arithmeticOp$ in ST
\(\frac{(\theta,\iget[state]{\I})=c_1 ~~~ (\eta,\iget[state]{\I})=c_2}{(\theta \arithmeticOp \eta,\iget[state]{\I})\stopeval c_1 \arithmeticOp c_2}
\)
and \dL \( \ivaluation{\I}{\theta \arithmeticOp \eta} = \ivaluation{\I}{\theta} \arithmeticOp \ivaluation{\I}{\eta}
\) for terms $\theta$ and $\eta$.
\qedhere
\end{itemize}
\end{proof}

\begin{proof}[Proof of Lemma~\ref{lemma:formula}]
Straightforward structural induction from 
\begin{itemize}
\item Comparisons $\relationalOpST\ \in \{<,>,>=,<=,<>,=\}$ in ST
\[\frac{(\theta,\iget[state]{\I})=c_1 \quad (\eta,\iget[state]{\I})=c_2}{(\theta \relationalOpST \eta,\iget[state]{\I})\stopeval c_1 \relationalOpST c_2}
\]
and \dL \(\ivaluation{\I}{\theta \relationalOpHP \eta} =\ivaluation{\I}{\theta} \relationalOpHP \ivaluation{\I}{\eta}
\) for operator \(\relationalOpHP \in \{<,\leq,=,\geq,>\}\) and terms $\theta$ and $\eta$.
\item Logical connectives in ST \[\frac{(\phi,\iget[state]{\I}) \stopeval p_1 \quad (\psi,\iget[state]{\I}) \stopeval p_2}{(\phi ~\text{AND}~ \psi,\iget[state]{\I}) \stopeval p_1 \land p_2}\enspace;\] in \dL \(\ivaluation{\I}{\phi \land \psi} = \ltrue\) iff \(\iget[state]{\I} \models \phi\) and \(\iget[state]{\I} \models \psi\), accordingly for \(\lnot,\lor\).
\qedhere
\end{itemize}
\end{proof}

\begin{proof}[Proof of Lemma~\ref{lemma:st-to-hp}]
By structural induction on ST programs from the base case \((\stprg{\stskip},\iget[state]{\I})\), so \((\iget[state]{\I},\iget[state]{\I}) \in \iaccess[\ptest{\ltrue}]{\I}\), and induction hypothesis \((\stprg{\stprogrameq{1}},\iget[state]{\I}) \stprgeval (\stprg{\stskip},\iget[state]{\It})\) then \((\iget[state]{\I},\iget[state]{\It}) \in \iaccess[\compilesteq{\stprogrameq{1}}]{\I}\).

\begin{case}[Assignment \(\compilesteq{\stassign{x}{\theta}} \compileop {\humod{\compilesteq{x}}{\compilesteq{\theta}}}\)] \label{case:det-assignment1}
We have to show \((\iget[state]{\I},\iget[state]{\It}) \in \iaccess[\humod{\compilesteq{x}}{\compilesteq{\theta}}]{\I}\).
Direct consequence of ST Assignment and induction hypothesis: \((\stprg{\stassign{x}{\theta}},\iget[state]{\I}) \stprgeval (\stprg{\stskip},\stsubst{\iget[state]{\I}}{x}{c})\) for \(\iget[state]{\I}(x) \stopeval c\), i.e., \(\stsubst{\iget[state]{\I}}{x}{c}\) denotes a state \(\iget[state]{\It}=\iget[state]{\I}\) except \(\iget[state]{\It}(x)\) is replaced with \(\theta\) at \(\iget[state]{\I}\).
\end{case}

\begin{case}[Sequential \(\compilesteq{\stprogrameq{1};\stprogrameq{2}} \compileop \compilesteq{\stprogrameq{1}};\compilesteq{\stprogrameq{2}}\)] \label{case:sequential-composition1}
We have to show \((\iget[state]{\I},\iget[state]{\It}) \in \iaccess[\compilesteq{\stprogrameq{1}};\compilesteq{\stprogrameq{2}}]{\I}\).
Let \((\stprg{\stprogrameq{1};\stprogrameq{2}},\iget[state]{\I}) \stprgeval (\stprg{\stskip},\iget[state]{\It})\) so by operational semantics there is \(\iget[state]{\Isig}\) such that \((\stprg{\stprogrameq{1}},\iget[state]{\I}) \stprgeval (\stprg{\stskip},\iget[state]{\Isig})\) and \((\stprg{\stprogrameq{2}},\iget[state]{\Isig}) \stprgeval (\stprg{\stskip},\iget[state]{\It})\).
From \((\stprg{\stprogrameq{1}},\iget[state]{\I}) \stprgeval (\stprg{\stskip},\iget[state]{\Isig})\) we get \((\iget[state]{\I},\iget[state]{\Isig}) \in \iaccess[\compilesteq{\stprogrameq{1}}]{\I}\) by induction hypothesis, and from \((\stprg{\stprogrameq{2}},\iget[state]{\Isig}) \stprgeval (\stprg{\stskip},\iget[state]{\It})\) we get \((\iget[state]{\Isig},\iget[state]{\It}) \in \iaccess[\compilesteq{\stprogrameq{2}}]{\I}\) by induction hypothesis and conclude \((\iget[state]{\I},\iget[state]{\It}) \in \iaccess[\compilesteq{\stprogrameq{1}};\compilesteq{\stprogrameq{2}}]{\I}\) by \dL.
\end{case}
\begin{case}[If-Then-Else \(\compilesteq{\stif{\ivr}{\stprogrameq{1}}{\stprogrameq{2}}} \compileop\\ \pchoice{\ptest{\compilesteq{\ivr};\compilesteq{\stprogrameq{1}}}}{\ptest{\lnot\compilesteq{\ivr}};\compilesteq{\stprogrameq{2}}}\)] 
Let \((\stif{\ivr}{\stprogrameq{1}}{\stprogrameq{2}},\iget[state]{\I}) \stprgeval (\stprg{\stskip},\iget[state]{\It})\).
We have to show \[(\iget[state]{\I},\iget[state]{\It}) \in \iaccess[\pchoice{\ptest{\compilesteq{\ivr}};\compilesteq{\stprogrameq{1}}}{\ptest{\lnot\compilesteq{\ivr}};\compilesteq{\stprogrameq{2}}}]{\I}\]

(i) Case \(\iget[state]{\I} \models \ivr\) and so \(\iget[state]{\I}(\ivr) \stopeval \top\) and in turn \((\iget[state]{\I},\iget[state]{\I}) \in \iaccess[\ptest{\compilesteq{\ivr}}]{\I}\) by Lemma~\ref{lemma:formula}.
By ST IF (1) \cite{darvas2017plc} therefore \((\stprg{\stif{\ivr}{\stprogrameq{1}}{\stprogrameq{2}}},\iget[state]{\I}) \stprgeval (\stprg{\stprogrameq{1}},\iget[state]{\I})\) and so \((\stprg{\stprogrameq{1}},\iget[state]{\I}) \stprgeval (\stprg{\stskip},\iget[state]{\It})\). 
Hence in turn we get 
\((\iget[state]{\I},\iget[state]{\It}) \in \iaccess[\compilesteq{\stprogrameq{1}}]{\I}\) by induction hypothesis.
Now \((\iget[state]{\I},\iget[state]{\I}) \in \iaccess[\ptest{\compilesteq{\ivr}}]{\I}\) and \((\iget[state]{\I},\iget[state]{\It}) \in \iaccess[\compilesteq{\stprogrameq{1}}]{\I}\) and so we get
\((\iget[state]{\I},\iget[state]{\It}) \in \iaccess[\ptest{\compilesteq{\ivr}};\compilesteq{\stprogrameq{1}}]{\I}\) by \dL.

(ii) Case \(\iget[state]{\I} \not\models \ivr\) and so \(\iget[state]{\I}(\ivr) \stopeval \bot\) and in turn \((\iget[state]{\I},\iget[state]{\I}) \in \iaccess[\ptest{\lnot\compilesteq{\ivr}}]{\I}\) by Lemma~\ref{lemma:formula}.
By ST IF (2) therefore \((\stprg{\stif{\ivr}{\stprogrameq{1}}{\stprogrameq{2}}},\iget[state]{\I}) \stprgeval (\stprg{\stprogrameq{2}},\iget[state]{\I})\) and so \((\stprg{\stprogrameq{2}},\iget[state]{\I}) \stprgeval (\stprg{\stskip},\iget[state]{\It})\). 
Hence in turn we get 
\((\iget[state]{\I},\iget[state]{\It}) \in \iaccess[\compilesteq{\stprogrameq{2}}]{\I}\) by induction hypothesis.
Now \((\iget[state]{\I},\iget[state]{\I}) \in \iaccess[\ptest{\lnot\compilesteq{\ivr}}]{\I}\) and \((\iget[state]{\I},\iget[state]{\It}) \in \iaccess[\compilesteq{\stprogrameq{2}}]{\I}\) and so we get
\((\iget[state]{\I},\iget[state]{\It}) \in \iaccess[\ptest{\lnot\compilesteq{\ivr}};\compilesteq{\stprogrameq{2}}]{\I}\) by \dL.

(iii) Now either \((\iget[state]{\I},\iget[state]{\It}) \in \iaccess[\ptest{\compilesteq{\ivr}};\compilesteq{\stprogrameq{1}}]{\I}\) or otherwise \((\iget[state]{\I},\iget[state]{\It}) \in \iaccess[\ptest{\lnot\compilesteq{\ivr}};\compilesteq{\stprogrameq{2}}]{\I}\) and so we conclude \[(\iget[state]{\I},\iget[state]{\It}) \in \iaccess[\pchoice{\ptest{\compilesteq{\ivr}};\compilesteq{\stprogrameq{1}}}{\ptest{\lnot\compilesteq{\ivr}};\compilesteq{\stprogrameq{2}}}]{\I}\]
\end{case}

\begin{case}[If-Then \(\compilesteq{\stifthen{\ivr}{\stprogrameq{1}}} \compileop\\ \pchoice{\ptest{\compilesteq{\ivr};\compilesteq{\stprogrameq{1}}}}{\ptest{\lnot\compilesteq{\ivr}}}\)] 

Let \((\stifthen{\ivr}{\stprogrameq{1}},\iget[state]{\I}) \stprgeval (\stprg{\stskip},\iget[state]{\It})\).
We have to show \((\iget[state]{\I},\iget[state]{\It}) \in \iaccess[\pchoice{\ptest{\compilesteq{\ivr}};\compilesteq{\stprogrameq{1}}}{\ptest{\lnot\compilesteq{\ivr}}}]{\I}\).

(i) Case \(\iget[state]{\I} \models \ivr\) and so \(\iget[state]{\I}(\ivr) \stopeval \top\) and in turn \((\iget[state]{\I},\iget[state]{\I}) \in \iaccess[\ptest{\compilesteq{\ivr}}]{\I}\) by Lemma~\ref{lemma:formula}.
By ST IF (3) therefore \((\stprg{\stifthen{\ivr}{\stprogrameq{1}}},\iget[state]{\I}) \stprgeval (\stprg{\stprogrameq{1}},\iget[state]{\I})\) and so \((\stprg{\stprogrameq{1}},\iget[state]{\I}) \stprgeval (\stprg{\stskip},\iget[state]{\It})\). 
Hence in turn we get 
\((\iget[state]{\I},\iget[state]{\It}) \in \iaccess[\compilesteq{\stprogrameq{1}}]{\I}\) by induction hypothesis.
Now \((\iget[state]{\I},\iget[state]{\I}) \in \iaccess[\ptest{\compilesteq{\ivr}}]{\I}\) and \((\iget[state]{\I},\iget[state]{\It}) \in \iaccess[\compilesteq{\stprogrameq{1}}]{\I}\) and so we get
\((\iget[state]{\I},\iget[state]{\It}) \in \iaccess[\ptest{\compilesteq{\ivr}};\compilesteq{\stprogrameq{1}}]{\I}\) by \dL.

(ii) Case \(\iget[state]{\I} \not\models \ivr\) and so \(\iget[state]{\I}(\ivr) \stopeval \bot\) and in turn \((\iget[state]{\I},\iget[state]{\I}) \in \iaccess[\ptest{\lnot\compilesteq{\ivr}}]{\I}\) by Lemma~\ref{lemma:formula}.
By ST IF (4) therefore \((\stprg{\stifthen{\ivr}{\stprogrameq{1}}},\iget[state]{\I}) \stprgeval (\stprg{\stskip},\iget[state]{\I})\). 
Now \((\iget[state]{\I},\iget[state]{\I}) \in \iaccess[\ptest{\lnot\compilesteq{\ivr}}]{\I}\) and \(\iget[state]{\I}=\iget[state]{\It}\) and so we get
\((\iget[state]{\I},\iget[state]{\It}) \in \iaccess[\ptest{\lnot\compilesteq{\ivr}}]{\I}\) by \dL.

(iii) Now either \((\iget[state]{\I},\iget[state]{\It}) \in \iaccess[\ptest{\compilesteq{\ivr}};\compilesteq{\stprogrameq{1}}]{\I}\) or \((\iget[state]{\I},\iget[state]{\It}) \in \iaccess[\ptest{\lnot\compilesteq{\ivr}}]{\I}\) and so we conclude \((\iget[state]{\I},\iget[state]{\It}) \in \iaccess[\pchoice{\ptest{\compilesteq{\ivr}};\compilesteq{\stprogrameq{1}}}{\ptest{\lnot\compilesteq{\ivr}}}]{\I}\).
\qedhere
\end{case}
\end{proof}

\begin{proof}[Proof of Lemma~\ref{lemma:hp-to-st}]
By structural induction over hybrid programs from the base case \((\stprg{\stskip},\iget[state]{\I})\) so \((\iget[state]{\I},\iget[state]{\I}) \in \iaccess[\ptest{\ltrue}]{\I}\), and induction hypothesis \((\stprg{\compilehpeq{\alpha}},\iget[state]{\I}) \stprgeval (\stprg{\stskip},\iget[state]{\It})\) then \((\iget[state]{\I},\iget[state]{\It}) \in \iaccess[\alpha]{\I}\).
\setcounter{case}{0}
\begin{case}[Assignment \(\compilehpeq{\humod{x}{\theta}} \compileop \stassign{\compilehpeq{x}}{ \compilehpeq{\theta}}\)]
We have to show \((\iget[state]{\I},\iget[state]{\It}) \in \iaccess[\humod{x}{\theta}]{\I}\).
From ST Assignment, we know that \((\stprg{\stassign{\compilehpeq{x}}{ \compilehpeq{\theta}}},\iget[state]{\I}) \stprgeval (\stprg{\stskip},\stsubst{\iget[state]{\I}}{\compilehpeq{x}}{c})\) for \(\I\compilehpeq{\theta} \stopeval c\), i.e., there is a state \(\iget[state]{\It}\) which agrees with \(\iget[state]{\I}\) except for the value of \(\compilehpeq{x}\): \(\iget[state]{\It}(\compilehpeq{x}) = \iaccess[\compilehpeq{\theta}]{\I}_{\iget[state]{\I}}\) and 
we conclude \((\iget[state]{\I},\iget[state]{\It}) \in \iaccess[\humod{x}{\theta}]{\I}\) by \dL from Lemma \ref{lemma:terms}.
\end{case}
\begin{case}[Sequential \(\compilehpeq{\alpha;\beta} \compileop \compilehpeq{\alpha};\compilehpeq{\beta}\)] \label{case:sequential-composition2}
We have to show \((\iget[state]{\I},\iget[state]{\It}) \in \iaccess[\alpha;\beta]{\I}\).
Let \((\stprg{\compilehpeq{\alpha;\beta}},\iget[state]{\I}) \stprgeval (\stprg{\stskip},\iget[state]{\It})\) and so by operational semantics there is \(\iget[state]{\Isig}\) such that \((\stprg{\compilehpeq{\alpha}},\iget[state]{\I}) \stprgeval (\stprg{\stskip},\iget[state]{\Isig})\) and \((\stprg{\compilehpeq{\beta}},\iget[state]{\Isig}) \stprgeval (\stprg{\stskip},\iget[state]{\It})\).
Therefore in turn \((\iget[state]{\I},\iget[state]{\Isig}) \in \iaccess[\alpha]{\I}\) and \((\iget[state]{\Isig},\iget[state]{\It}) \in \iaccess[\beta]{\Isig}\) by induction hypothesis and we conclude \((\iget[state]{\I},\iget[state]{\It}) \in \iaccess[\alpha;\beta]{\I}\) by \dL.
\end{case}
\begin{case}[Guarded Choice \(\compilehpeq{\pchoice{\ptest{\ivr};\alpha}{\beta}} \compileop \stif{\compilehpeq{\ivr}}{\compilehpeq{\alpha}\\}{\compilehpeq{\beta}}\)]
Let \((\stprg{\stif{\compilehpeq{\ivr}}{\compilehpeq{\alpha}}{\compilehpeq{\beta}}},\iget[state]{\I}) \stprgeval (\stprg{\stskip},\iget[state]{\It}))\).
We have to show \((\iget[state]{\I},\iget[state]{\It}) \in \iaccess[\pchoice{\ptest{\ivr};\alpha}{\beta}]{\I}\).

(i) Case \(\iget[state]{\I} \models \ivr\) and so \(\iget[state]{\I}(\compilehpeq{\ivr}) \stopeval \top\) and in turn \((\iget[state]{\I},\iget[state]{\I}) \in \iaccess[\ptest{\ivr}]{\I}\) by Lemma~\ref{lemma:formula}.
By ST IF (1) therefore \((\stprg{\stif{\compilehpeq{\ivr}}{\compilehpeq{\alpha}}{\compilehpeq{\beta}}},\iget[state]{\I}) \stprgeval (\stprg{\compilehpeq{\alpha}},\iget[state]{\I}))\) and in turn \((\stprg{\compilehpeq{\alpha}},\iget[state]{\I})) \stprgeval (\stprg{\stskip},\iget[state]{\It})\).
Now \((\iget[state]{\I},\iget[state]{\It}) \in \iaccess[\alpha]{\I}\) by induction hypothesis and we therefore get \((\iget[state]{\I},\iget[state]{\It}) \in \iaccess[\ptest{\ivr};\alpha]{\I}\) by \dL.

(ii) Case \(\iget[state]{\I} \not\models \ivr\) and so \(\iget[state]{\I}(\compilehpeq{\ivr}) \stopeval \bot\).
By ST IF (2) therefore \((\stprg{\stif{\compilehpeq{\ivr}}{\compilehpeq{\alpha}}{\compilehpeq{\beta}}},\iget[state]{\I}) \stprgeval (\stprg{\compilehpeq{\beta}},\iget[state]{\I}))\) and so \((\stprg{\compilehpeq{\beta}},\iget[state]{\I})) \stprgeval (\stprg{\stskip},\iget[state]{\It})\) and in turn \((\iget[state]{\I},\iget[state]{\It}) \in \iaccess[\beta]{\I}\) by induction hypothesis.

(iii) Now either \((\iget[state]{\I},\iget[state]{\It}) \in \iaccess[\ptest{\ivr};\alpha]{\I}\) or \((\iget[state]{\I},\iget[state]{\It}) \in \iaccess[\beta]{\I}\) and we conclude \((\iget[state]{\I},\iget[state]{\It}) \in \iaccess[\pchoice{\ptest{\ivr};\alpha}{\beta}]{\I}\) by \dL.
\end{case}

\begin{case}[If-Then-Else \(\compilehpeq{\pchoice{\ptest{\ivr};\alpha}{\ptest{\lnot\ivr};\beta}} \compileop \stif{\compilehpeq{\ivr}}{\\\compilehpeq{\alpha}}{\compilehpeq{\beta}}\)]
We have to show \((\iget[state]{\I},\iget[state]{\It}) \in \iaccess[\pchoice{\ptest{\ivr};\alpha}{\ptest{\lnot\ivr};\beta}]{\I}\).
Let \((\stprg{\stif{\compilehpeq{\ivr}}{\compilehpeq{\alpha}}{\compilehpeq{\beta}}},\iget[state]{\I}) \stprgeval (\stprg{\stskip},\iget[state]{\It}))\).

(i) Case \(\iget[state]{\I} \models \ivr\) and so \(\iget[state]{\I}(\compilehpeq{\ivr}) \stopeval \top\) and in turn \((\iget[state]{\I},\iget[state]{\I}) \in \iaccess[\ptest{\ivr}]{\I}\) by Lemma~\ref{lemma:formula}.
By ST IF (1) therefore \((\stprg{\stif{\compilehpeq{\ivr}}{\compilehpeq{\alpha}}{\compilehpeq{\beta}}},\iget[state]{\I}) \stprgeval (\stprg{\compilehpeq{\alpha}},\iget[state]{\I}))\) and in turn \((\stprg{\compilehpeq{\alpha}},\iget[state]{\I})) \stprgeval (\stprg{\stskip},\iget[state]{\It})\).
Now \((\iget[state]{\I},\iget[state]{\It}) \in \iaccess[\alpha]{\I}\) by induction hypothesis and we therefore get \((\iget[state]{\I},\iget[state]{\It}) \in \iaccess[\ptest{\ivr};\alpha]{\I}\) by \dL.

(ii) Case \(\iget[state]{\I} \not\models \ivr\) and so \(\iget[state]{\I}(\compilehpeq{\ivr}) \stopeval \bot\) and in turn \((\iget[state]{\I},\iget[state]{\I}) \in \iaccess[\ptest{\lnot\ivr}]{\I}\) by Lemma~\ref{lemma:formula}.
By ST IF (2) hence \((\stprg{\stif{\compilehpeq{\ivr}}{\compilehpeq{\alpha}}{\compilehpeq{\beta}}},\iget[state]{\I}) \stprgeval (\stprg{\compilehpeq{\beta}},\iget[state]{\I}))\), so \((\stprg{\compilehpeq{\beta}},\iget[state]{\I})) \stprgeval (\stprg{\stskip},\iget[state]{\It})\) and in turn \((\iget[state]{\I},\iget[state]{\It}) \in \iaccess[\beta]{\I}\) by induction hypothesis; we conclude \((\iget[state]{\I},\iget[state]{\It}) \in \iaccess[\ptest{\lnot\ivr};\beta]{\I}\).

(iii) Now either \((\iget[state]{\I},\iget[state]{\It}) \in \iaccess[\ptest{\ivr};\alpha]{\I}\) or \((\iget[state]{\I},\iget[state]{\It}) \in \iaccess[\ptest{\lnot\ivr};\beta]{\I}\) and we conclude \((\iget[state]{\I},\iget[state]{\It}) \in \iaccess[\pchoice{\ptest{\ivr};\alpha}{\ptest{\lnot\ivr};\beta}]{\I}\) by \dL.
\end{case}

\begin{case}[If-Then \(\compilehpeq{\pchoice{\ptest{\ivr};\alpha}{\ptest{\lnot\ivr}}} \compileop \stifthen{\compilehpeq{\ivr}}{\\\compilehpeq{\alpha}}\)]
Let \((\stprg{\stifthen{\compilehpeq{\ivr}}{\compilehpeq{\alpha}}},\iget[state]{\I}) \stprgeval (\stprg{\stskip},\iget[state]{\It}))\).
We have to show \((\iget[state]{\I},\iget[state]{\It}) \in \iaccess[\pchoice{\ptest{\ivr};\alpha}{\ptest{\lnot\ivr}}]{\I}\).

(i) Case \(\iget[state]{\I} \models \ivr\) and so \(\iget[state]{\I}(\compilehpeq{\ivr}) \stopeval \top\) and in turn \((\iget[state]{\I},\iget[state]{\I}) \in \iaccess[\ptest{\ivr}]{\I}\) by Lemma~\ref{lemma:formula}.
By ST IF (3) therefore \((\stprg{\stifthen{\compilehpeq{\ivr}}{\compilehpeq{\alpha}}},\iget[state]{\I}) \stprgeval (\stprg{\compilehpeq{\alpha}},\iget[state]{\I}))\) and in turn \((\stprg{\compilehpeq{\alpha}},\iget[state]{\I})) \stprgeval (\stprg{\stskip},\iget[state]{\It})\).
Now \((\iget[state]{\I},\iget[state]{\It}) \in \iaccess[\alpha]{\I}\) by induction hypothesis and we therefore get \((\iget[state]{\I},\iget[state]{\It}) \in \iaccess[\ptest{\ivr};\alpha]{\I}\) by \dL.

(ii) Case \(\iget[state]{\I} \not\models \ivr\) and so \(\iget[state]{\I}(\compilehpeq{\ivr}) \stopeval \bot\) and in turn \((\iget[state]{\I},\iget[state]{\I}) \in \iaccess[\ptest{\lnot\ivr}]{\I}\) by Lemma~\ref{lemma:formula}.
By ST IF (4) therefore \((\stprg{\stifthen{\compilehpeq{\ivr}}{\compilehpeq{\alpha}}},\iget[state]{\I}) \stprgeval (\stprg{\stskip},\iget[state]{\I}))\) and so \(\iget[state]{\I}=\iget[state]{\It}\) and in turn \((\iget[state]{\I},\iget[state]{\It}) \in \iaccess[\ptest{\lnot\ivr}]{\I}\).

(iii) Now either \((\iget[state]{\I},\iget[state]{\It}) \in \iaccess[\ptest{\ivr};\alpha]{\I}\) or \((\iget[state]{\I},\iget[state]{\It}) \in \iaccess[\ptest{\lnot\ivr}]{\I}\) and we conclude \((\iget[state]{\I},\iget[state]{\It}) \in \iaccess[\pchoice{\ptest{\ivr};\alpha}{\ptest{\lnot\ivr}}]{\I}\) by \dL.
\qedhere
\end{case}

\end{proof}

\end{document}